\title{Learning Causal Graphs via Monotone Triangular Transport Maps}
\author{%
  \hspace{-.3cm}Sina Akbari \\
  \hspace{-.3cm}\scalebox{.8}{College of Management of Technolgy}\\
  \hspace{-.3cm}EPFL, Switzerland \\
  \hspace{-.3cm}\texttt{sina.akbari@epfl.ch} \\
  \And
  \hspace{-.4cm}Luca Ganassali \\
  \hspace{-.4cm}\scalebox{.8}{College of Management of Technolgy}\\
  \hspace{-.4cm}EPFL, Switzerland \\
  \hspace{-.4cm}\texttt{luca.ganassali@epfl.ch}\\
  \And
  \hspace{-.4cm}Negar Kiyavash \\
  \hspace{-.4cm}\scalebox{.8}{College of Management of Technolgy}\\
  \hspace{-.4cm}EPFL, Switzerland \\
  \hspace{-.4cm}\texttt{negar.kiyavash@epfl.ch} \\
}
\begin{document}

\maketitle

\begin{abstract}

We study the problem of causal structure learning from data using optimal transport (OT).
Specifically, we first provide a constraint-based method which builds upon lower-triangular monotone parametric transport maps to design conditional independence tests which are agnostic to the noise distribution. 
We provide an algorithm for causal discovery up to Markov Equivalence with no assumptions on the structural equations/noise distributions, which allows for settings with latent variables. 
Our approach also extends to score-based causal discovery by providing a novel means for defining scores. 
This allows us to uniquely recover the causal graph under additional identifiability and structural assumptions, such as additive noise or post-nonlinear models.
We provide experimental results to compare the proposed approach with the state of the art on both synthetic and real-world datasets.
\end{abstract}

\vspace{-.2cm}
\section{Introduction}
\vspace{-.1cm}
Recovering the causal structure between the variables of a system from observational data is a coveted goal in several disciplines of science.
The importance of this task has become increasingly evident in the realm of artificial intelligence over the past few decades.
This is mainly because a clear understanding of the causal structure in data can greatly enhance predictions of variables under external manipulations, and eliminate systematic biases in inference.

The existing approaches for recovering the causal mechanisms can be largely categorized into \emph{score-based} and \emph{constraint-based} methods.
Most of the existing score-based methods impose 
constraints on either the functional assignment model, or the data distribution.
For instance, they may limit the problem to linear models \cite{shimizu2006linear,seigal2022linear,zheng2018dags}, or models with additive noise \cite{hoyer2008nonlinear,ng2020role,peters2014causal,montagna2023scalable}, or restrict the data distribution to a limited class, e.g. Gaussian or discrete  \cite{kalainathan2018structural, lachapelle2019gradient,rolland2022score,montagna2023scalable}.
These methods can be sensitive to the choice of model assumptions, and may fail to recover the correct causal model if the relationships between variables are complex, or latent variables exist.
However, there is abundant evidence that information about the sparsity of the underlying causal graph improves the estimation efficiency of these methods \cite{zhang2009ica, sanchez2019estimating, shimizu2011directlingam, glymour2019review,haufe2010sparse}.
An established approach for uncovering the sparsity pattern of the graph involves conducting conditional independence tests, as commonly employed in constraint-based methods, such as PC \cite{spirtes2000causation}.
Unfortunately, conditional independence (CI) testing is only well understood -- theoretically speaking -- for either Gaussian or discrete data distributions, and proves to be inefficient in practice outside of this scope.
Despite recent progress in the study of kernel-based CI tests \cite{gretton2007kernel,chwialkowski2014kernel,zhang2012kernel,doran2014permutation}, conducting CI tests for more general data generating processes remains a daunting task.
This presents a significant challenge, since non-Gaussian continuous data is prevalent in various natural phenomena.

In this study, we employ the optimal transport (OT) framework to characterize arbitrary continuous distributions that are not necessarily Gaussian.
Through the use of this approach, we offer several advantages compared to existing approaches in the literature.
To begin with, this method provides a means to conduct conditional independence tests on continuous data with non-Gaussian distribution.
This can hence be used as a building block for every constraint-based causal discovery algorithm.
Moreover, it allows for a straightforward way to define and determine scores in the context of score-based causal discovery by characterizing the joint distribution of the variables.
We will elaborate on this point further in Section \ref{sec:anm}.

\paragraph{Related work}
To the best of our knowledge, the work presented in \cite{tu2022optimal} is the only work to date that has explored the application of optimal transport (OT) in the context of learning causal structure from data.
In this work the authors considered a two-dimensional additive noise model of the form 
\begin{equation}\label{eq:bivariate}
    (X_1,X_2) \coloneqq (U_1, f(X_1) + U_2),
\end{equation} 
where $U_1,U_2$ are independent exogenous noise variables, and sought to distinguish cause from effect among the two variables $X_1$ and $X_2$. 
The main idea in their approach comes from viewing the distribution of $(X_1,X_2)$ as the pushforward measure of that of noises $(U_1,U_2)$. 
The authors made the crucial assumption that the solution $T_{ot}$ to the following standard optimal transport problem with $L^2$ cost coincides with the structural model of Eq.~\eqref{eq:bivariate}.
\begin{equation}\label{eq:ot_a_la_ruibo}
    T_{\mathrm{ot}} := \argmin_{T \in \cM(\mu,\nu), T_1=\id} \dE_{\mu}[\Vert\mathbf{U}-T(\mathbf{U})\Vert^2],
\end{equation} 
where $\cM(\mu,\nu)$ is the set of couplings of the distribution $\mu$ of $\mathbf{U}=(U_1,U_2)$, and the distribution $\nu$ of $\mathbf{X}=(X_1,X_2)$.
Note that in \eqref{eq:ot_a_la_ruibo}, the authors impose the first coordinate of the transport map to be identity.
A simple criterion for $T_{\mathrm{ot}}$ to correspond to a cause-effect additive noise model is given by $\mathrm{div}(T_{\mathrm{ot}}-\id) = 0$, where $\mathrm{div}$ is the divergence. 
In practice, the authors rely on a conditional variance test of the form $\Var((T_{\mathrm{ot}})_{2}(\mathbf{U})-U_2 \, | \, X_1) = 0$, where $(T_{\mathrm{ot}})_{2}$ is the second coordinate of the map $T_{\mathrm{ot}}$, to test causal direction.
While the work \cite{tu2022optimal} paves the way to apply OT to causal discovery,
it remains unclear how the method can be generalized to multivariate models; for instance,
in higher dimensions ($\geq 3$),
as the proposed divergence-based criterion turns out to be only necessary, but far from sufficient.
Moreover,  OT problem in Eq.~\eqref{eq:ot_a_la_ruibo} requires the knowledge of the distribution $\mu$ which is often not available. It is unclear how robust this approach in \cite{tu2022optimal} is to noise distribution misspecification in higher dimensions.
Finally, it not clear either how to extend the approach in \cite{tu2022optimal} to models that violate the additive noise assumption.

Other works have explored the use of optimal transport framework to extract certain relations among variables from the joint distribution.
Most relevant to our study is the work \cite{spantini2018inference} which highlighted the presence of a specific type of pairwise conditional independence relations, i.e., the independence of two variables given all the rest of variables, in the Hessian information of the log density of the joint distribution. 
In line with this observation, \cite{morrison2017beyond} emphasized that these independence relations can be leveraged to recover the independence map of a Markov random field. 

Another line of work \cite{rolland2022score,montagna2023scalable} also highlights the extraction of information from the joint distribution of variables for causal discovery, but only in the case of an additive non-linear model with Gaussian noises.
The specific form of the joint distribution $\pi(\mathbf{X})$ in this case allows for reading conditional independencies pertaining to leaf nodes in the derivative of the score function $s(\mathbf{X}) := \nabla \log \pi(\mathbf{X})$. 
Such a remarkable property can be utilized to identify the leaves of a causal graph.
Applying this procedure recursively results in identifying the causal structure with linear time complexity \cite{rolland2022score,montagna2023scalable}.

\paragraph{Contributions}
Our contributions can be summarized as follows:
\begin{enumerate}[wide,labelindent=0pt,label=(\roman*)]
    \item 
    We propose a novel causal discovery method based on optimal transport (OT), designed to be agnostic to the noise distribution.
    The foundation of this method
    draws inspiration from the work by Morrison et. al. \cite{morrison2017beyond}, which originally focused on structure learning in Markov random fields.
    They utilized a parametric OT framework to infer the structure by constructing a lower triangular monotone map, denoted as $S$, between an unknown data distribution and a reference distribution, typically a standard isotropic Gaussian.
    We extend this method to causal discovery domain by incorporating additional conditional independence tests.
    It is noteworthy that our method does not rely on any assumptions regarding the structural or noise properties, and it produces the underlying causal graph up to Markov equivalence.
    Moreover, our approach is applicable in the presence of latent variables.

    \item Under additional identifiability and structural assumptions such as additive noise or post-nonlinear models, we propose further methods to recover the causal graph uniquely, based on the same construction as in $(i)$ and a notion of score given by the structural assumptions and the shape of the transport map.
    We demonstrate the dual purpose of the OT-based framework in this context: first, it facilitates the recovery of the sparsity map, and thereby enhances efficiency. Second, it offers a coherent framework for evaluating scores.

    \item We provide novel characterizations of the additive noise and post-nonlinear models, generalizing the divergence criteria of \cite{tu2022optimal} to higher dimensions.
    We show that our criteria are both necessary and sufficient for assessing whether whether data is generated from a model belonging to these two classes of SEMs.
    We demonstrate that our OT framework offers a natural and effective means for determining the validity of these criteria.
\end{enumerate}

\paragraph{Paper organization}
We define the problem in Section \ref{sec:pb_setup} and review the definitions of structural equation models and identifiability. 
We give some background on transport maps in Section \ref{sec:transport_maps}, introducing the lower-triangular monotone maps, also called Knothe-Rosenblatt maps, and argue that these maps are particularly well-adapted for causal discovery, as illustrated by Theorem \ref{th:id_with_KR} in Section \ref{sec:KR_causal}.
Section \ref{sec:mec} is dedicated to the description of our OT-based causal recovery method: we first discuss the parameterization of the learned maps, and how to extract conditional independencies from these maps. 
We then describe our algorithm, which we present as a variation of the PC algorithm \cite{spirtes2000causation}.  
Section \ref{sec:anm} describes our score-based approaches under further additive noise model or post non-linear model assumptions. 
Numerical experiments are presented in Section \ref{sec:numerical_experiments}, as well as the definition of measures of merit for our estimators. 

\vspace{-.2cm}
\section{Preliminaries}\label{sec:preliminaries}

\vspace{-0.1cm} \subsection{Problem setup}\label{sec:pb_setup}
A directed acyclic graph (DAG) is defined as $\mathcal{G}=(\mathbf{X},E)$, where $\mathbf{X}=\{X_1,\dots,X_d\}$, and $E\subseteq \mathbf{X}\times\mathbf{X}$ denote the set of vertices and edges of this graph, respectively, such that $\mathcal{G}$ cointains no directed cycle.
Each vertex $X_k\in\mathbf{X}$ represents a random variable.
For each vertex $X_k$, $\Par(X_k)$ denotes the set of it parents in $\mathcal{G}$.
~We say two DAGs are Markov equivalent if they share the same d-separation relations \cite{pearl1988probabilistic}.
Throughout this work, we assume that the random variables $\{X_1,\dots,X_d\}$ are governed by a \emph{structural equations model} (SEM) \cite{pearl2009causality},
\vspace{-0.1cm}
\begin{equation}\label{eq:sem}
    \forall \, 1 \leq k \leq d, \quad X_k \coloneqq f_k((X_\ell)_{X_\ell \in \Par(X_k)},U_k),
\end{equation} 
where $(U_k)_{1 \leq k \leq d}$ are mutually independent noise variables.
Let $\pi_{\mathbf{X}}$ denote\footnote{We drop subscript $\mathbf{X}$ whenever it is clear from context.} the probability distribution over $\mathbf{X}$ induced by the SEM defined in Eq.~\eqref{eq:sem}.
$\pi_{\mathbf{X}}$ is commonly referred to as the \emph{observational distribution} in the literature.

Causal discovery refers to the task of learning the causal graph $\mathcal{G}$  from i.i.d. samples drawn from observational distribution $\pi$.
Two causal DAGs within the same Markov equivalence class are not distinguishable from merely the observational data.
In other words, the causal graph is \emph{identifiable} up to Markov equivalence class using the observational data \cite{pearl2009causality}.
However, the causal graph may become uniquely identifiable under further assumptions.

\paragraph{Identifiability within a class of SEMs}  
Let $\cC$ be a class of SEMs, that is, a subclass of structural models of the form \eqref{eq:sem}. 
Given a distribution $\pi$, we say that the causal graph $\mathcal{G}$ is identifiable within the class $\cC$, 
if every SEM $\cS\in\cC$ that induces distribution $\pi$ yields the causal structure $\mathcal{G}$.
In other words, the causal DAG is identifiable in class $\cC$ if there is a \emph{unique} DAG $\cG$ compatible with $(i)$ the data distribution and $(ii)$ a SEM in class $\cC$.
Examples of such classes $\cC$ and some of these assumptions required for identifiability will be given in Section \ref{sec:anm}.

\begin{definition}[$\cG$-compatible orderings]\label{def:id}
Given a causal DAG $\cG$, we say that a permutation $\sigma$ of $\{ 1, \ldots, d \}$ is a \emph{($\cG$-)compatible
(causal) ordering} if for all $k,\ell$,
\begin{equation*}
    X_\ell \in \Par(X_k) \implies \sigma(\ell) < \sigma(k) \, . 
\end{equation*}
\end{definition}
Markov equivalence class of $\cG$ (a.k.a. the essential graph) together with a $\cG$-compatible ordering uniquely characterizes $\cG$.

\vspace{-0.2cm} \subsection{Background on transport maps}\label{sec:transport_maps}

Throughout this work, we assume all  probability distributions are absolutely continuous with respect to Lebesgue measure. With a slight  abuse of notation, we use the same notation to denote a distribution and its density. 

A \emph{transport map} $S$ between two distributions $\mu$ and $\nu$ in $\dR^d$ is a map $S : \dR^d \to \dR^d$ such that the pushforward of $\mu$ by $S$ is $\nu$.
In other terms $S(X) \sim \nu$ when $X \sim \mu$. Assuming $S$ is invertible, we denote by $S_{\#}\mu$ the \emph{pushforward} of $\mu$ by the map $S$, and by $S^{\#}\nu$ the \emph{pullback} of $\nu$ by $S$. These are easily obtained by a multi-dimensional change of variables as follows:
\begin{equation}\label{eq:map}
    \begin{split}
        S_{\#}\mu(\mathbf{y})&=\mu\circ S^{-1}(\mathbf{y})\:\vert\!\det(\nabla S^{-1}(\mathbf{y}))\vert,\\
        S^{\#}\nu(\mathbf{x})&=\nu\circ S(\mathbf{x})\:\vert\!\det(\nabla S(\mathbf{x}))\vert.
    \end{split}
\end{equation}
In general, there are many such transport maps $S$. 
A special class of transport maps, well suited to the problem of recovering a causal graph (see Section \ref{sec:KR_causal}), or a causal ordering, is the lower-triangular maps.  In particular, when $\mu$ and $\nu$ have positive densities with respect to Lebesgue measure, there exists a unique lower-triangular map $S$ of the following form
\begin{equation}
    S(\mathbf{x})=
    \begin{bmatrix*}[l]
        S_1(x_1)\\
        S_2(x_1,x_2)\\
        \vdots\\
        S_d(x_1,\dots,x_d)
    \end{bmatrix*},
\end{equation} which satisfies the measure transformations of Eq.~\eqref{eq:map} and such that for each component $k$, $S_k$ is strictly increasing in the last variable \cite{knothe1957contributions, rosenblatt1952remarks, bogachev2005triangular, carlier2010from}. This map is sometimes referred to as the Knothe-Rosenblatt (KR) map \cite{knothe1957contributions,rosenblatt1952remarks}.
Note that each component $S_k$ only depends on $x_1,\dots,x_k$ and the strict monotonocity of $S_k$ in $x_k$ implies that this map is invertible \footnote{Note that by this definition, the transport map defined in \cite{tu2022optimal} as the solution of problem \eqref{eq:ot_a_la_ruibo} is exactly the KR map between the distributions of  $U=(U_1,U_2)$ and $X=(X_1,X_2)$. 
}. 

We refer the interested reader to \cite{carlier2010from}, in which Knothe-Rosenblatt maps have been studied thoroughly.
In particular, KR maps are shown to be characterized as the limit of solutions of the standard optimal transport problem in the regime where
the quadratic cost becomes degenerate (see \cite{carlier2010from}, Theorem 2.1). 

\vspace{-0.2cm} \subsection{Knothe-Rosenblatt maps for causal discovery}\label{sec:KR_causal}

As discussed earlier KR maps are well suited for the problem of recovering a causal graph. 
The Theorem below, proof of which is given in Appendix \ref{apx:proofs}, makes this statement precise.
\begin{restatable}{theorem}{thmidwithkr}\label{th:id_with_KR}
Assume that random variables $\{ X_1, \ldots, X_d\}$ are governed by the SEM given in \eqref{eq:sem}, 
Moreover, assume that for all $1 \leq k \leq d$,  map $f_k$ is strictly increasing in the last variable, and that the cumulative distribution function (c.d.f) of $U_k$, denoted by $F_{U_k}$, is strictly increasing. 
    \begin{itemize}
        \item[$(i)$] Let $\sigma$ be an arbitrary $\cG$-compatible ordering. KR map $S(\sigma)$ between the distribution of $(U_{\sigma(1)}, \ldots, U_{\sigma(d)})$ and that of $(X_{\sigma(1)}, \ldots, X_{\sigma(d)})$ coincides with the SEM equations \eqref{eq:sem}, that is,
        for all $1 \leq k \leq d$,
        \begin{equation}\label{eq:th:id_with_KR}
        \hspace{-1cm}
            S(\sigma)_k(u_{\sigma(1)},\ldots,u_{{\sigma(k-1)}},u_{\sigma(k)}) = f_{\sigma(k)}\left( (S(\sigma)_{\ell}(u_{\sigma(1)}, \ldots, u_{\sigma(\ell)}))_{\ell : X_{\sigma(\ell)} \in \Par(X_{\sigma(k)})}, u_{\sigma(k)}  \right) \, .
        \end{equation}
        Moreover, KR maps corresponding to any $\cG$-compatible ordering are the same up to a permutation\footnote{Note that these permutations necessarily preserve the lower-triangular structure.}.
        \item[$(ii)$] If the causal mechanism
        is identifiable within a class $\cC$ of SEMs (Def.~\ref{def:id}), then $\sigma$ is a $\cG$-compatible ordering if and only if the KR map $S(\sigma)$ provides a SEM in class $\cC$.
    \end{itemize}
\end{restatable}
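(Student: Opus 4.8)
The plan is to recognize that the structural equations, solved recursively along a compatible ordering, already assemble into a lower-triangular monotone map, and then to invoke the uniqueness of the Knothe--Rosenblatt map recalled in Section~\ref{sec:transport_maps}. Concretely, for part $(i)$ I fix a $\cG$-compatible ordering $\sigma$ and define a candidate map $T$ by recursively solving \eqref{eq:sem} in the order $\sigma(1),\ldots,\sigma(d)$: I set $x_{\sigma(k)} := f_{\sigma(k)}\bigl((x_{\sigma(\ell)})_{\ell \,:\, X_{\sigma(\ell)} \in \Par(X_{\sigma(k)})}, u_{\sigma(k)}\bigr)$ and let $T_k$ be the resulting function of $(u_{\sigma(1)},\ldots,u_{\sigma(k)})$. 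First I would verify by induction on $k$ that $T_k$ depends only on the first $k$ input coordinates: since $\sigma$ is compatible, every parent $X_{\sigma(\ell)}$ of $X_{\sigma(k)}$ has $\ell < k$, so by the induction hypothesis it is already a function of $(u_{\sigma(1)},\ldots,u_{\sigma(k-1)})$; hence $T$ is lower-triangular. Next, because the parent values feeding into $T_k$ do not involve $u_{\sigma(k)}$ and $f_{\sigma(k)}$ is strictly increasing in its last argument, $T_k$ is strictly increasing in $u_{\sigma(k)}$. Finally, $T$ pushes the (reordered) noise law onto the (reordered) observational law by construction, since evaluating $T$ at a noise sample reproduces exactly the SEM-generated variables. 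The strict monotonicity of each $F_{U_k}$ guarantees the noise density is positive on its support and that $T$ is invertible, placing us in the regime where the lower-triangular, last-coordinate-increasing transport map is unique. Therefore $T = S(\sigma)$, which is precisely \eqref{eq:th:id_with_KR}.

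For the ``up to a permutation'' claim, I would observe that the map $\Phi : (u_1,\ldots,u_d) \mapsto (x_1,\ldots,x_d)$ obtained by solving the entire SEM is intrinsic to the model and independent of any ordering; each $S(\sigma)$ is merely $\Phi$ conjugated by the coordinate permutation that reorders the axes according to $\sigma$. Hence two KR maps associated with compatible orderings $\sigma,\sigma'$ differ only by the coordinate permutation relating $\sigma$ to $\sigma'$, which preserves the triangular structure because both are topological orderings of $\cG$.

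For part $(ii)$, the forward direction is immediate: if $\sigma$ is $\cG$-compatible, part $(i)$ identifies $S(\sigma)$ with the generating SEM (reordered), which belongs to $\cC$. For the converse, I would use that the KR map \emph{always} yields a valid SEM: since $S(\sigma)$ is lower-triangular and invertible, I can solve for the earlier noises in terms of the earlier variables and substitute, re-expressing each $X_{\sigma(k)}$ as a function of its predecessor variables and its own noise $U_{\sigma(k)}$; the minimal DAG $\cG_\sigma$ supporting these equations is a subgraph of the complete DAG on the order $\sigma$, and this SEM induces exactly $\pi$. If this SEM lies in $\cC$, then identifiability within $\cC$ (Def.~\ref{def:id}) forces $\cG_\sigma = \cG$. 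But $\sigma$ is by construction a topological ordering of $\cG_\sigma$, hence of $\cG$, i.e.\ $\sigma$ is $\cG$-compatible.

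The main obstacle, or at least the step requiring the most care, is the uniqueness argument underpinning $T = S(\sigma)$: one must check that the hypotheses on the $f_k$ and $F_{U_k}$ genuinely deliver positive densities and an invertible lower-triangular monotone map, so that the characterization of the KR map applies verbatim, including the case where the supports are proper subintervals of $\dR$. The remainder is bookkeeping: the triangularity induction and the monotonicity both follow directly once the compatible-ordering property is used to guarantee that parents are computed before their children.
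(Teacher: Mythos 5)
Your proof is correct, and for part $(i)$ it runs the paper's argument in the opposite direction. The paper constructs $S(\sigma)$ through the recursive Brenier characterization of KR maps (Appendix \ref{apx:brenier}): it writes each one-dimensional component explicitly as $F_{\nu}^{-1}\circ F_{\mu}$, uses independence of the noises and compatibility of $\sigma$ to collapse the conditional distributions, and verifies by induction that the resulting component equals $f_{\sigma(k)}$ as in \eqref{eq:th:id_with_KR}. You instead assemble the SEM \eqref{eq:sem} along $\sigma$ into a candidate map $T$, check the three defining properties (lower-triangularity via the compatibility induction, strict monotonicity in the last coordinate since the parent values do not involve $u_{\sigma(k)}$, and the pushforward property by construction), and conclude $T=S(\sigma)$ from uniqueness of the lower-triangular monotone transport (Section \ref{sec:transport_maps}, Lemma \ref{lem:uniqueness_brenier}). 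Both arguments rest on the same uniqueness fact; yours buys brevity and avoids the c.d.f.\ algebra entirely (incidentally sidestepping the slip in the paper's final display, where $f_\ell^{-1}$ should read $f_\ell$), at the cost of having to verify the hypotheses of the uniqueness statement --- and you rightly flag that the law of $\mathbf{X}$ need not have positive density on all of $\dR^d$ when the $f_k$ have bounded range, a point the paper's own computation glosses over equally, since it inverts $F_{f_1(U_1)}$, which is defined only on its range. Your conjugation argument for the ``up to a permutation'' claim matches what the paper leaves implicit. For $(ii)$ you in fact supply strictly more than the paper, which disposes of it in one sentence (``follows from the identifiability assumption''): your converse --- that any KR map $S(\sigma)$ yields, after triangular inversion and substitution, a SEM with independent noises whose minimal DAG $\cG_\sigma$ admits $\sigma$ as a topological order, so that membership in $\cC$ together with identifiability within $\cC$ forces $\cG_\sigma=\cG$ and hence $\cG$-compatibility of $\sigma$ --- is the genuine content of that direction, and it is sound.
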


In other words, given a $\cG$-compatible ordering, KR map recovers the true underlying SEM.
Moreover, if the structure is uniquely identifiable (for instance in the case of additive noise models), computing KR maps allows us identify a $\cG$-compatible causal ordering, and hence $\cG$ itself. 

We introduce our approach for OT-based causal discovery in the two subsequent sections. This approach is comprised of two steps. 
The first step, discussed in Section \ref{sec:mec}, recovers the causal graph up to Markov equivalence, without requiring any model or structural assumptions.
The second step, details of which are given in Section \ref{sec:anm}, aims at learning a $\cG$-compatible ordering, which identifies the causal graph, under additional assumptions such as restricting the model to the ANM class.
\section{Recovering the essential graph via monotone triangular transport maps}\label{sec:mec}
\vspace{-.1cm}
\subsection{A parametrization of the transport maps} \label{sec:para_joint_map}
Henceforth, we consider the KR map from $\pi$ (or some of its marginals) to a known, smooth, log-concave source distribution $\eta$ with the same dimension. 
Throughout this work, $\eta$ will be taken to be the multivariate isotropic normal distribution\footnote{The dependence of $\eta$ on dimension will be omitted when it is evident from the context.}.
The idea behind this approach is that if the KR map $S$ from the data distribution $\pi$ to $\eta$ can be estimated efficiently with finitely many samples, then $\pi$ can be simply represented as the pullback of $\eta$ by $S$, namely $\pi = S^{\#}\eta$.

In practice, we parameterize the KR map $S$ with a vector of parameters ${\bm \alpha}$.
The parameterized transport map $S_{{\bm \alpha}}$ is estimated by optimizing over the set of parameters ${\bm \alpha}$ such that the Kullback-Leibler divergence between the density $\pi$ and the pullback of the source distribution $\eta$ by the map $S_{{\bm \alpha}}$ is minimized:
\vspace{-0.2cm}
\begin{equation}\label{eq:alpha_star}
    \begin{split}
        {\bm \alpha}^*&\!=\!\argmin_{\bm\alpha} D_\KL(\pi\Vert S_{{\bm \alpha}}^{\#}\eta) 
        \!=\!\argmin_{\bm\alpha} \mathbbm{E}_\pi[\log\pi\!-\log S_{{\bm \alpha}}^{\#}\eta]
        \!\approx\! \argmax_{\bm\alpha}\frac{1}{n}\sum_{i=1}^n\log\left(S_{{\bm \alpha}}^{\#}\eta(\mathbf{x}^i)\right),
    \end{split}
\end{equation}
where $\{\mathbf{x}^i\}_{1 \leq i \leq n}$ are the i.i.d. samples of data.
Following \cite{morrison2017beyond}, an efficient way to parameterize $S_{{\bm \alpha}}$ in order to enforce both the lower-triangular shape and the monotonicity assumption is the following:
\vspace{-0.1cm}
\begin{equation}\label{eq:param_S}
    (S_{{\bm \alpha}})_k(x_1, \ldots, x_k) = c_{k, {\bm \alpha}}(x_1, \ldots, x_{k-1}) + \int_{0}^{x_k} g \circ h_{k, {\bm \alpha}}(x_1, \ldots, x_{k-1},t) \mathrm{d}t,
\end{equation} where $g:\dR \to \dR$ is a positive map, and $c_{k, {\bm \alpha}} : \dR^{k-1} \to \dR$ (resp. $h_{k, {\bm \alpha}} : \dR^{k} \to \dR$) is a linear combination of multivariate Hermite polynomials $\{\phi_s\}_{s \geq 0}$ (resp. multivariate Hermite functions $\{\psi_s\}_{s \geq 0}$).

Note that map $S_{{\bm \alpha}}$ defined in \eqref{eq:param_S} has the desired lower-triangular shape and each $(S_{{\bm \alpha}})_k$ is strictly increasing in the last variable.
The well-known fact that Hermite functions form a Hilbert basis of $L^2(\dR)$ justifies this parametrization \cite{morrison2017beyond,spantini2018inference}. The expressiveness of the model depends on the maximal degree of the Hermite polynomials/functions 
in $(c_{k, {\bm \alpha}},h_{k, {\bm \alpha}})_{1 \leq k \leq d}$. 
Moreover, since $\eta$ is log-concave and the parametrization of $S_{{\bm \alpha}}$ is linear in $\alpha$, the optimization problem \eqref{eq:alpha_star} is convex. In some recent work was  proposed to learn maps $c_{k, {\bm \alpha}}$ and $h_{k, {\bm \alpha}}$ with neural networks \cite{papamakarios2017}.  

In this work, the positive map $g$ in \eqref{eq:param_S} is always fixed as the square function.
This allows us to compute all the integrals in maps $S_{\bm \alpha}$ easily and in closed form (as they correspond to moments of Gaussian variables), 
as well as all the partial derivatives of $S_{\bm \alpha}$ -- which we will need in the sequel.

\vspace{-0.2cm} \subsection{Capturing conditional independencies in marginal densities}\label{sec:closure}
We denote the independence of $X_\ell$ and $X_k$ conditioned on $\mathbf{Z}$ by $X_\ell\indep X_k\vert \mathbf{Z}$.
Recall that $\pi$ denotes the distribution (or, density) of the data $\mathbf{X}=\{X_1,\dots,X_d\}$.
In order to explain how the conditional independencies can be read directly off the marginals of $\pi$, we need the following assumption on $\pi$.
\begin{assumption}[Smoothness \& strict positivity]\label{assum:smooth_positivity}
    Density $\pi$ 
    is positive and its second-order partial derivatives are defined everywhere.
\end{assumption}

Lemma 2 of \cite{spantini2018inference} establishes a characterization of conditional independence in terms of Hessian information of the density $\pi$.
Herein, we adapt their lemma for our purpose.
\begin{restatable}{lemma}{lemcitest}[Adapted from \cite{spantini2018inference}]
\label{lem:citest} 
    Suppose $\pi$ satisfies Assumption \ref{assum:smooth_positivity}.
    Let $\pi_\mathbf{Z}$ denote the marginal density over $\mathbf{Z}\subseteq\mathbf{X}$.
    For any two variables $X_k,X_\ell\in\mathbf{Z}$, the following equivalence holds:
    \vspace{-0.1cm}
    \begin{equation*}
        X_k \indep X_\ell \mid \mathbf{Z}\setminus\{X_k,X_\ell\} \iff \dfrac{\partial^2\log\pi_{\mathbf{Z}}}{\partial x_k\partial x_\ell}=0 \mbox{ on $\dR^{\vert\mathbf{Z}\vert}$}\, .
    \end{equation*}

\end{restatable}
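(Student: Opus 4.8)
The plan is to reduce the statement to the classical factorization criterion for conditional independence and then translate it, via the logarithm, into an additive separability condition on $\log \pi_{\mathbf{Z}}$. Write $\mathbf{W} := \mathbf{Z} \setminus \{X_k, X_\ell\}$ and denote a generic point of $\dR^{|\mathbf{Z}|}$ by $(x_k, x_\ell, \mathbf{w})$. The key fact I would invoke is that, for a strictly positive joint density, $X_k \indep X_\ell \mid \mathbf{W}$ holds if and only if the density factorizes as $\pi_{\mathbf{Z}}(x_k, x_\ell, \mathbf{w}) = g(x_k, \mathbf{w})\, h(x_\ell, \mathbf{w})$ for some strictly positive functions $g, h$. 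Taking logarithms, this is equivalent to the additive decomposition $\log \pi_{\mathbf{Z}}(x_k, x_\ell, \mathbf{w}) = A(x_k, \mathbf{w}) + B(x_\ell, \mathbf{w})$, and the whole proof amounts to showing that such a decomposition exists precisely when the mixed partial $\partial^2_{x_k x_\ell} \log \pi_{\mathbf{Z}}$ vanishes identically.

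For the forward direction ($\Rightarrow$), I would assume conditional independence. Then the additive decomposition above holds, so $A$ does not depend on $x_\ell$ and $B$ does not depend on $x_k$; differentiating once in $x_k$ and once in $x_\ell$ annihilates both summands, giving $\partial^2_{x_k x_\ell} \log \pi_{\mathbf{Z}} = 0$ everywhere. This direction is immediate once the factorization is in hand, using only that each summand is constant in one of the two coordinates.

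For the converse ($\Leftarrow$), suppose $\partial^2_{x_k x_\ell} \log \pi_{\mathbf{Z}} \equiv 0$ on $\dR^{|\mathbf{Z}|}$. Fixing $\mathbf{w}$, the map $x_\ell \mapsto \partial_{x_k} \log \pi_{\mathbf{Z}}(x_k, x_\ell, \mathbf{w})$ has vanishing derivative, so $\partial_{x_k} \log \pi_{\mathbf{Z}}$ is independent of $x_\ell$; integrating in $x_k$ from a fixed basepoint, via the fundamental theorem of calculus on the convex domain $\dR^{|\mathbf{Z}|}$, yields $\log \pi_{\mathbf{Z}}(x_k, x_\ell, \mathbf{w}) = A(x_k, \mathbf{w}) + B(x_\ell, \mathbf{w})$, where $A$ is the integral and $B$ is the remaining $x_k$-independent term. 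Exponentiating gives the product form, and the factorization criterion then delivers $X_k \indep X_\ell \mid \mathbf{W}$.

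The main obstacle is less the differential calculus than justifying its hypotheses cleanly. First, the argument runs on $\pi_{\mathbf{Z}}$, so I must check that the marginal inherits strict positivity and twice-differentiability from Assumption \ref{assum:smooth_positivity} on $\pi$: positivity is preserved under marginalization of a positive density, and smoothness follows under the integrability implicit in the assumption, permitting differentiation under the integral sign. Second, the equivalence between conditional independence and the product factorization requires strict positivity, both to upgrade the usual almost-everywhere statements to pointwise ones and to ensure $\log \pi_{\mathbf{Z}}$ is everywhere well defined and smooth. Finally, the integration step in the converse relies crucially on the domain being all of $\dR^{|\mathbf{Z}|}$: on a product (in particular simply connected) domain, the global vanishing of the mixed partial forces global additive separability with no topological obstruction, which is exactly where the qualifier ``on $\dR^{|\mathbf{Z}|}$'' in the statement does its work.
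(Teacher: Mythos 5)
Your proof is correct and takes essentially the same route as the paper's: the forward direction differentiates the logarithm of the factorized density, and the converse solves the PDE $\partial^2_{x_k x_\ell}\log\pi_{\mathbf{Z}}=0$ to obtain the additive decomposition $\log\pi_{\mathbf{Z}} = A(x_k,\mathbf{w}) + B(x_\ell,\mathbf{w})$ and then exponentiates. The only differences are in bookkeeping: you actually justify the form of the general solution of the PDE (which the paper merely asserts) via constancy of $\partial_{x_k}\log\pi_{\mathbf{Z}}$ in $x_\ell$ and integration, while you invoke the positive-density factorization criterion as a known fact where the paper instead verifies it by the explicit integral computation of the conditional density; you also commendably flag the need for the marginal $\pi_{\mathbf{Z}}$ to inherit positivity and differentiability, a point the paper passes over.
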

Proof of Lemma \ref{lem:citest} appears in Appendix \ref{apx:proofs}. 
%


For a fixed subset $\mathbf{Z}\subseteq\mathbf{X}$, let $S_\mathbf{Z}$ be a transport map that pushes the marginal density $\pi_\mathbf{Z}$ forward to a multivariate isotropic Gaussian $\eta$.
In light of Lemma \ref{lem:citest}, 
the conditional independence relations can be determined by assessing the partial derivatives of $S_\mathbf{Z}^\#\eta$ at all points and observing if they are all zero. 
When the variables are continuous, determining whether these derivatives are zero everywhere is impractical.
Instead, 
we propose the following \emph{conditional independence score\footnote{Note that for an arbitrary real-valued function $f$ and variables $\mathbf{Z}$, 
$\dE_\pi[f(\mathbf{Z})^2]=0\iff f(\mathbf{Z})=0 \:\:\:\pi-a.s.$}}
to test the conditional independence $X_k\indep X_\ell \mid \mathbf{Z}\setminus\{X_k,X_\ell\}$:
\begin{flalign}
    \Omega^\mathbf{Z}_{k \ell} & \coloneqq \!\dE_{\pi_\mathbf{Z}}\left[ \left( \dfrac{\partial^2}{\partial x_k\partial x_\ell} \log\pi_{\mathbf{Z}}(\mathbf{z})\right)^2 \right]
     \!=\! \dE_{\pi_\mathbf{Z}}\left[ \left( \dfrac{\partial^2}{\partial x_k\partial x_\ell} \log S_{\mathbf{Z}}^\#\eta(\mathbf{z}) \right)^2 \right] 
      \!\approx\! \frac{1}{n}\!\sum_{i=1}^n\left(\dfrac{\partial^2}{\partial x_k\partial x_\ell}S_{\mathbf{Z}}^\#\eta(\mathbf{z}^i)\right)^2, \label{eq:def_omega}
\end{flalign}
where $\{\mathbf{z}^i\}_{i=1}^n$ are observed samples of $\mathbf{Z}$.
In practice, finite sample approximations of $\bm\alpha$ and $\Omega^\mathbf{Z}_{k\ell}$ could yield small but non-zero entries when the corresponding independence holds.
To deal with this issue, 
we compare the conditional independence score to a properly chosen threshold $\tau^\mathbf{z}_{k \ell}$.
The threshold is chosen in proportion to the standard deviation of $\Omega^\mathbf{Z}_{k\ell}$, driven by the objective of isolating those entries whose standard deviation renders them indistinguishable from zero.
Morrison et. al. \cite{morrison2017beyond} 
take a similar thresholding approach, albeit employing the absolute value of the partial derivatives of log density as the independence score rather than the squared form of Eq.~\eqref{eq:def_omega}.
The standard deviation of $\Omega^\mathbf{Z}_{k\ell}$ is approximated as
\[\varsigma(\Omega^\mathbf{Z}_{k\ell})\approx
\frac{1}{n}(\nabla_{\bm\alpha}\Omega^\mathbf{Z}_{k\ell})^T\Gamma({\bm\alpha})^{-1}(\nabla_{\bm\alpha}\Omega^\mathbf{Z}_{k\ell})\Big\vert_{\bm\alpha=\bm\alpha^*},\]
where $\Gamma({\bm\alpha})$ is the Fisher information matrix \cite{casella2002statistical}, and $\nabla_{\bm\alpha}\Omega^\mathbf{Z}_{k\ell}$ denotes the gradient of $\Omega^\mathbf{Z}_{k\ell}$ with respect to $\bm\alpha$.
For further details of the rationale behind this choice of thresholding and its consistency analysis, see \cite{morrison2017beyond}.

It is crucial to note that this criterion does not require any assumption on the distribution class.
SING algorithm \cite{morrison2017beyond} employs the criterion above
only for the set $\mathbf{Z}\coloneqq\mathbf{X}$ to recover the Markov random field structure.
In the context of DAGs, their approach is termed as \emph{total conditioning} (TC) by \cite{pellet2008using}, and is shown to recover the moralized graph of $\mathcal{G}$, where each vertex is adjacent to its Markov boundary \cite{pellet2008using}. In our method, the maps $S_{\mathbf{Z}}$ for every $\mathbf{Z}\subseteq\mathbf{X}$ will be parameterized in the same fashion as $S$ in \eqref{eq:param_S}, except that dimension $d$ will be replaced with a smaller $d' := \left| \mathbf{Z} \right|$.

\vspace{-0.2cm} \subsection{Description of the algorithm}
Under faithfulness assumption \cite{spirtes2000causation},
the conditional independence relations encoded in the data distribution are equivalent to d-separations in the causal DAG.
The conditional independence test developed in the previous section can therefore be employed as a module in any constraint-based causal discovery method. 
To illustrate this point, we present our method PC-OT as a variation of the PC algorithm \cite{spirtes2000causation}, summarized as Algorithm \ref{alg:pc} in Appendix \ref{apx:pc}. Our algorithm begins with estimating the two-dimensional marginals, revealing some of the independencies, and iteratively increases the size of set $\mathbf{Z}$.
Note that even in the presence of latent variables, the joint distribution of the observable variables can still be represented as a pullback measure of a standard Gaussian distribution of the same dimension.
Although this map may have nothing to do with the underlying SEM in this case, once a representation $\pi = S^{\#}\eta$ (or\footnote{In general, the marginal distribution $\pi_{\mathbf{Z}}$ for $\mathbf{Z}\subseteq\mathbf{X}$ is not modeled by a SEM with independent noises, and hence the marginal model for nodes in $\mathbf{Z}$ is equivalent to a SEM with latent variables.} $\pi_\mathbf{Z} = S_\mathbf{Z}^{\#}\eta$) is obtained, Lemma \ref{lem:citest} applies and the conditional independencies are revealed. 
In the presence of latent variables, OT-based variants of algorithms such as FCI and RFCI \cite{spirtes2000causation, colombo2012learning} can be developed analogous to Alg.~\ref{alg:pc}.

\vspace{-.2cm}
\section{Refined OT-based structure learning: going further under structural  assumptions}\label{sec:anm}
\vspace{-.1cm}

In this section, we shall undertake an analysis of our OT-based causal discovery approach under the consideration of class assumptions that enable the unique identification of the causal DAG.
The results stated in this section are applicable to any class within which the causal graph is identifiable (refer to Def.~\ref{def:id}) and membership in that class is testable.
As two illustrative examples of such classes, we discuss additive noise models (ANMs) and post-nonlinear models (PNLs) in the sequel.

\vspace{-0.2cm} \subsection{Additive noise models}
Additive noise models (ANMs) are defined as follows.
\begin{definition}[ANM]
\label{assum:ANM}
We say that the SEM of Eq.~\eqref{eq:sem} forms an additive noise model if  
\begin{equation}\label{eq:assum:ANM}
    \forall \, 1 \leq k \leq d, \quad f_k((X_\ell)_{X_\ell \in \Par(X_k)},U_k)\coloneqq g_k((X_\ell)_{X_\ell \in \Par(X_k)}) + U_k,
\end{equation} 
that is, the structural equation pertaining to any variable is additive in the corresponding noise.
\end{definition}
Note that as long as the noise variables $U_k$ have a strictly positive density, the ANMs satisfy the conditions of Theorem \ref{th:id_with_KR}.
For any ordering $\sigma$, we denote by $\pi_\sigma$ the joint distribution of $(X_{\sigma(1)}, \ldots, X_{\sigma(d)})$. 
If $\sigma$ is $\cG-$compatible, then in view of Theorem \ref{th:id_with_KR}, the KR map $S(\sigma)$ from $\pi_\sigma$ to $\eta$ is of the form 
\vspace{-0.2cm} 
\begin{equation}\label{eq:ANM_KR_map}
S(\sigma)_k(x_{\sigma(1)}, \ldots, x_{\sigma(k)}) = M_k(\sigma) \left(x_{\sigma(k)} - g_k((x_\ell)_{X_\ell \in \Par(X_{\sigma(k)})})\right), 
\end{equation} where $M_k(\sigma)$ is the strictly increasing transport map from the distribution of $U_{\sigma(k)}$ to a standard Gaussian $\cN(0,1)$. Note that, up to a one-dimensional monotonous map, the partial derivative of $S(\sigma)_k$ with respect to its last variable is a constant, and specifically equal to $1$. 
This observation constitutes a characterization of ANMs, as formalized below.
\begin{restatable}{lemma}{lemanmtest}\label{lem:ANMtest} 
    Suppose $\pi$ satisfies Assumption \ref{assum:smooth_positivity}.
    Let $\sigma$ be an ordering. Then, the following are equivalent.
    \begin{itemize}
        \item $\pi$ is induced by an ANM, and $\sigma$ is $\cG-$compatible.
        \item for all $1 \leq k \leq d$, there exists a strictly increasing map $B_k(\sigma): \dR \to \dR$ such that 
        \begin{equation}\label{eq:lem:ANMtest}
            \frac{\partial}{\partial x_{\sigma(k)}} B_k(\sigma) \circ S(\sigma)_k (x_{\sigma(1)}, \ldots, x_{\sigma(k)}) - 1 = 0 \, .
        \end{equation} 
    \end{itemize}
\end{restatable}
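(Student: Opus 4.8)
The plan is to prove the two implications separately, relying on the explicit shape of the KR map for ANMs in Eq.~\eqref{eq:ANM_KR_map} for the forward direction, and on an explicit reconstruction of independent noises from $S(\sigma)$ for the converse. Throughout, Assumption~\ref{assum:smooth_positivity} guarantees that $S(\sigma)$ and all the partial derivatives appearing below are well defined.

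For the forward implication, suppose $\pi$ is induced by an ANM and $\sigma$ is $\cG$-compatible. Then Theorem~\ref{th:id_with_KR} specializes to Eq.~\eqref{eq:ANM_KR_map}, namely
\[
S(\sigma)_k(x_{\sigma(1)}, \ldots, x_{\sigma(k)}) = M_k(\sigma)\!\left(x_{\sigma(k)} - g_k\big((x_\ell)_{X_\ell \in \Par(X_{\sigma(k)})}\big)\right),
\]
with $M_k(\sigma)$ strictly increasing. I would set $B_k(\sigma) := M_k(\sigma)^{-1}$, which is again strictly increasing, so that $B_k(\sigma) \circ S(\sigma)_k = x_{\sigma(k)} - g_k\big((x_\ell)_{X_\ell \in \Par(X_{\sigma(k)})}\big)$. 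Since $\sigma$ is $\cG$-compatible, every parent of $X_{\sigma(k)}$ precedes it in the order, so $g_k$ depends only on $x_{\sigma(1)}, \ldots, x_{\sigma(k-1)}$ and not on $x_{\sigma(k)}$; differentiating in $x_{\sigma(k)}$ then yields exactly $1$, which is precisely Eq.~\eqref{eq:lem:ANMtest}.

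For the converse, I would integrate Eq.~\eqref{eq:lem:ANMtest} in $x_{\sigma(k)}$. Because $\partial_{x_{\sigma(k)}}\big(B_k(\sigma) \circ S(\sigma)_k\big) \equiv 1$, the fundamental theorem of calculus produces a function $g_k$ of $x_{\sigma(1)}, \ldots, x_{\sigma(k-1)}$ alone (the constant of integration, depending only on the frozen variables) with
\[
B_k(\sigma) \circ S(\sigma)_k(x_{\sigma(1)}, \ldots, x_{\sigma(k)}) = x_{\sigma(k)} - g_k(x_{\sigma(1)}, \ldots, x_{\sigma(k-1)}).
\]
Writing $Y_k := S(\sigma)_k(X_{\sigma(1)}, \ldots, X_{\sigma(k)})$ for $\mathbf{X} \sim \pi$ and setting $\tilde U_k := B_k(\sigma)(Y_k)$, the display above reads $X_{\sigma(k)} = g_k(X_{\sigma(1)}, \ldots, X_{\sigma(k-1)}) + \tilde U_k$. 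This is exactly an ANM in the order $\sigma$: the induced DAG draws an edge into $X_{\sigma(k)}$ from each earlier variable on which $g_k$ genuinely depends, and $\sigma$ is a compatible ordering of it by construction.

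The crux, and the main obstacle, is the mutual independence of the reconstructed noises $\tilde U_1, \ldots, \tilde U_d$, which is what upgrades this triangular decomposition into a bona fide ANM rather than an arbitrary autoregressive rewriting. Here I would invoke the defining property of the KR map: $S(\sigma)$ pushes $\pi_\sigma$ forward to the isotropic Gaussian $\eta = \cN(0, I_d)$, so $(Y_1, \ldots, Y_d)$ has independent standard Gaussian coordinates. Since each $\tilde U_k = B_k(\sigma)(Y_k)$ is a function of $Y_k$ alone, independence of the $Y_k$ transfers to mutual independence of the $\tilde U_k$. Finally, as these structural equations recover the law of $\mathbf{X}$ from independent noises through the invertible triangular map $S(\sigma)$, the constructed ANM indeed induces $\pi$, which closes the equivalence.
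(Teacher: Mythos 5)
Your proof is correct and takes essentially the same route as the paper's: the forward direction via Theorem~\ref{th:id_with_KR} and the explicit ANM form \eqref{eq:ANM_KR_map} of the KR map with $B_k(\sigma) = M_k(\sigma)^{-1}$, and the converse by integrating the PDE in $x_{\sigma(k)}$ to obtain $B_k(\sigma) \circ S(\sigma)_k = x_{\sigma(k)} - g_k(x_{\sigma(1)},\ldots,x_{\sigma(k-1)})$ and then deducing mutual independence of the reconstructed noises from the independence of the Gaussian coordinates of $S(\sigma)(\mathbf{X}) \sim \eta$. If anything your write-up is slightly more careful: the paper has a harmless sign slip (defining $U_k := -B_k \circ S_k(X_1,\ldots,X_k)$ while writing $X_k = h_k + U_k$) and leaves implicit the final points you spell out, namely that the reconstructed triangular system induces $\pi$ and that $\sigma$ is compatible with the resulting DAG by construction.
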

\vspace{-0.4cm}
Note that $B_k(\sigma)$ in Eq.~\eqref{eq:lem:ANMtest} corresponds to $M_k(\sigma)^{-1}$ in Eq.~\eqref{eq:ANM_KR_map}. In practice, for a given ordering $\sigma$, we can parameterize each map $B_k(\sigma)$ with vector ${\bm \beta_k}$, and ${\bm \beta^*_k}$ is estimated by optimizing the natural loss given by Lemma \ref{lem:ANMtest}:
\vspace{-0.2cm}
\begin{flalign}\label{eq:beta_star_k}
    {\bm \beta^*_k} & := \argmin_{\bm \beta_k} \dE_{\pi}\left[ \left| \frac{\partial}{\partial x_{\sigma(k)}} [B_k(\sigma)]_{{\bm \beta_k}} \circ (S(\sigma)_{\bm \alpha^*})_k(X_{\sigma(1)},\ldots,X_{\sigma(k)}) -1 \right| \right] \nonumber \\
    & \approx \argmin_{\bm \beta_k} \,  [\mathrm{ANMloss}_{k}(\sigma,\mathbf{x})]({\bm \beta_k}),
\end{flalign} where
\vspace{-0.3cm} 
\begin{equation}\label{eq:ANM_loss_k_beta}
    [\mathrm{ANMloss}_{k}(\sigma,\mathbf{x})]({\bm \beta_k}) := \sum_{i=1}^{n} \left| \frac{\partial}{\partial x_{\sigma(k)}} [B_k(\sigma)]_{{\bm \beta_k}} \circ (S(\sigma)_{\bm \alpha^*})_k(x^i_{\sigma(1)},\ldots,x^i_{\sigma(k)}) -1 \right|, 
\end{equation}
and (we recall) $\mathbf{x}^1, \ldots, \mathbf{x}^n$ are observed samples. 
Note that we used $S(\sigma)_{\bm \alpha^*}$ with ${\bm \alpha^*}$ being the solution of the optimization problem \eqref{eq:alpha_star}, as introduced in Section \ref{sec:mec}.

An efficient way to parameterize $[B_k(\sigma)]_{{\bm \beta_k}}$ 
is as follows:
\begin{equation}\label{eq:param_beta_k}
[B_k(\sigma)]_{{\bm \beta_k}}(u)  := \int_{0}^{u} g \circ b_{k,{\bm \beta_k}}(t) \mathrm{d}t ,
    \end{equation} where $g:\dR \to \dR$ is a positive (the quadratic function in our case), and $b_{k,{\bm \beta_k}} : \dR \to \dR$ is a linear combinations of Hermite functions $\{\psi_s\}_{s \geq 0}$. Note that the parameterization in \eqref{eq:param_beta_k} enforces strict monotonicity of $u \mapsto [B_k(\sigma)]_{{\bm \beta_k}}(u)$.

Note that given the Markov equivalence class, and a $\cG$-compatible ordering, the causal graph is uniquely identified.
Under identifiability assumptions for ANMs \cite{hoyer2008nonlinear}, every $\cG$-compatible ordering is consistent with the true underlying causal order.
As such, identifying one $\cG$-compatible ordering suffices to recover the causal DAG.
On account of Lemma \ref{lem:ANMtest}, we devise a method to decide $\cG$-compatibility of an ordering under ANM assumption.
To this end, we compute the \emph{ANM loss} corresponding to an ordering, introduced subsequently.

\paragraph{ANM loss.} 
The \emph{ANM loss} of an ordering $\sigma$, parameterized by ${\bm \gamma} \in (\dR_{>0})^d$ is defined as
\vspace{-0.12cm}
\begin{equation}\label{eq:ANM_loss}
    \mathrm{ANMloss}_{\bm \gamma}(\sigma,\mathbf{x}) := \sum_{k=1}^{d} \gamma_k [\mathrm{ANMloss}_{k}(\sigma,\mathbf{x})]({\bm \beta^*_k}),
\end{equation} 
where ${\bm \alpha^*}$, ${\bm \beta^*_k}$ and $\mathrm{ANMloss}_{k}(\sigma,\mathbf{x})$ are defined in \eqref{eq:alpha_star}, \eqref{eq:beta_star_k} and \eqref{eq:ANM_loss_k_beta}. 
Evidently, an ordering $\sigma$ is $\cG$-compatible if and only its ANM loss is zero.
In practice, we choose the ordering with the lowest ANM loss. Algorithm \ref{alg:anm} illustrates this procedure, which takes as input the essential graph of $\cG$\footnote{For instance, Algorithm \ref{alg:pc} can be utilized to recover the essential graph from data.}.


\paragraph{Possible orderings} Given an essential graph $\hat{\cG}$, we need to test every possible causal graph $\cH$ in the Markov equivalence class $\cM(\hat{\cG})$ of $\hat{\cG}$. For any such graph $\cH$, choose an arbitrary $\cH-$compatible ordering $\sigma_\cH$.
We define the set of \emph{possible orderings}, denoted $\Sigma(\hat{G})$, as follows:
\begin{equation}\label{eq:def_possible_orderings}
    \Sigma(\hat{G}) := \{ \sigma_\cH, \, \cH \in \cM(\hat{\cG}) \} \, .
    \end{equation} Note that in general, depending on $\hat{\cG}$, $\Sigma(\hat{G})$ is way smaller than the set of permutations of $\{ 1, \ldots, d\}$, a trivial upper bound on $\Sigma(\hat{G})$ is $2^{e(\hat{G})}$, where $e(\hat{G})$ is the number of edges in $\hat{G}$

\setlength{\textfloatsep}{8pt}
\begin{algorithm}
\caption{ANM--OT}
\label{alg:anm}
\begin{algorithmic}[1]
    \Statex\textbf{input:} an essential graph $\hat{G}$ , $n$ i.i.d. samples from the observational distribution $\{\mathbf{x}^i\}_{i=1}^n\sim\pi$, scaling parameters $\gamma$ 
    \Statex\textbf{output:} a causal DAG $\mathcal{G}$ 
        \For{every possible ordering $\sigma \in \Sigma(\hat{G})$} 
        \State $\ell(\sigma, \mathbf{x}) \gets \mathrm{ANMloss}_{\bm \gamma}(\sigma, \mathbf{x})$
    \EndFor
    \State\Return the causal DAG $\mathcal{G}$  with causal ordering $\sigma_{\mathrm{opt}} \in \argmin_{\sigma \in \Sigma} \ell(\sigma, \mathbf{x})$
\end{algorithmic}
\end{algorithm}

\vspace{-0.2cm} \subsection{Post non-linear models}
Post non-linear (PNL) models  \cite{uemura2022multivariate,uemura2020estimation,zhang2015estimation}, known to be a general identifiable class of models, are defined as follows. 

\begin{definition}[PNL]
\label{assum:PNL}
We say that the SEM of Eq.~\eqref{eq:sem} forms a post-nonlinear model if
\begin{equation}\label{eq:assum:PNL}
    \forall \, 1 \leq k \leq d, \quad 
    f_k((X_\ell)_{X_\ell \in \Par(X_k)},U_k)\coloneqq h_k\left(g_k((X_\ell)_{X_\ell \in \Par(X_k)}) + U_k\right),
\end{equation}  
where the functions $h_k$ are invertible.
\end{definition}

Note that the PNL model reduces to an ANM when $h_k$ is the identity for all $k$.
That is, ANMs are a special case of PNLs.
The PNL class is identifiable if we prevent some singular functions and noise distributions \cite{zhang2009pnls}. 
We show the following characterization of PNLs.
\begin{restatable}{lemma}{lempnltest}\label{lem:PNLtest} 
    Suppose $\pi$ satisfies Assumption \ref{assum:smooth_positivity}.
    Let $\sigma$ be an ordering. The following are equivalent.
    \begin{itemize}
        \item $\pi$ is induced by a PNL model, and $\sigma$ is $\cG-$compatible.
        \item For all $1 \leq k \leq d$, there exists a strictly increasing map $B_k(\sigma): \dR \to \dR$ such that 
        \begin{equation}\label{eq:lem:PNLtest}
            \forall \, 1 \leq \ell \leq d, \, \ell \neq k, \quad \frac{\partial^2}{\partial x_{\sigma(l)} \partial x_{\sigma(k)}} B_k(\sigma) \circ S(\sigma)_k (x_{\sigma(1)}, \ldots, x_{\sigma(k)}) = 0 \, .
        \end{equation} 
    \end{itemize}
\end{restatable}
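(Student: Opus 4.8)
The plan is to prove both implications by pivoting on a single structural observation: the vanishing of the mixed second derivatives in \eqref{eq:lem:PNLtest} is equivalent to $B_k \circ S(\sigma)_k$ splitting additively into a univariate function of $x_{\sigma(k)}$ plus a function of the remaining (earlier) coordinates. I would use this additive splitting as the bridge between the analytic condition and the PNL structure, and crucially exploit the defining property of the KR map $S(\sigma)$—that it pushes $\pi_\sigma$ forward to the \emph{product} measure $\eta$—to control the mutual independence of the reconstructed noises.

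For the forward direction, I would first reduce a PNL to an ANM. Assuming $\pi$ is induced by a PNL with compatible $\sigma$, I may take each $h_k$ strictly increasing without loss of generality: if $h_k$ is decreasing, replacing $h_k(t), g_k, U_{\sigma(k)}$ by $h_k(-t), -g_k, -U_{\sigma(k)}$ gives an equivalent PNL with increasing link and still-independent noise. Setting $Y_k := h_k^{-1}(X_{\sigma(k)}) = g_k(\mathrm{parents}) + U_{\sigma(k)}$ and re-expressing $g_k$ through $X_{\sigma(\ell)} = h_\ell(Y_\ell)$ exhibits the vector $(Y_k)$ as an ANM. Since each $h_k^{-1}$ is an increasing coordinatewise (diagonal) map, composing the KR map of the $Y$-system with this diagonal transformation produces a lower-triangular, last-variable-increasing map pushing $\pi_\sigma$ to $\eta$; by uniqueness this is exactly $S(\sigma)$, and \eqref{eq:ANM_KR_map} for the ANM gives $S(\sigma)_k = M_k(\sigma)\bigl(h_k^{-1}(x_{\sigma(k)}) - \tilde g_k(x_{\sigma(1)}, \ldots, x_{\sigma(k-1)})\bigr)$. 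Choosing $B_k := M_k(\sigma)^{-1}$ then yields $B_k \circ S(\sigma)_k = h_k^{-1}(x_{\sigma(k)}) - \tilde g_k(x_{\sigma(1)}, \ldots, x_{\sigma(k-1)})$, whose mixed partial with respect to $x_{\sigma(k)}$ and any $x_{\sigma(\ell)}$, $\ell \neq k$, clearly vanishes.

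For the reverse direction, I would run the splitting argument backwards. Hypothesis \eqref{eq:lem:PNLtest} says $\partial_{x_{\sigma(k)}}[B_k \circ S(\sigma)_k]$ is annihilated by $\partial_{x_{\sigma(\ell)}}$ for every $\ell < k$ (the cases $\ell > k$ being automatic, as $S(\sigma)_k$ depends only on the first $k$ coordinates), so it depends on $x_{\sigma(k)}$ alone. Integrating in $x_{\sigma(k)}$ gives $B_k \circ S(\sigma)_k = R_k(x_{\sigma(k)}) + Q_k(x_{\sigma(1)}, \ldots, x_{\sigma(k-1)})$, where $R_k$ inherits strict monotonicity from the composition (itself increasing in $x_{\sigma(k)}$ since both $B_k$ and $S(\sigma)_k$ are) and is therefore invertible. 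Writing $Z_k := S(\sigma)_k(X_{\sigma(1)}, \ldots, X_{\sigma(k)})$, the KR property makes $(Z_1, \ldots, Z_d)$ jointly $\eta$-distributed, i.e. independent standard Gaussians, so $U_k := B_k(Z_k)$ are mutually independent. Inverting the splitting gives $X_{\sigma(k)} = R_k^{-1}\bigl(U_k - Q_k(X_{\sigma(1)}, \ldots, X_{\sigma(k-1)})\bigr) = h_k\bigl(g_k(X_{\sigma(1)}, \ldots, X_{\sigma(k-1)}) + U_k\bigr)$ with $h_k := R_k^{-1}$ invertible and $g_k := -Q_k$, which is precisely a PNL SEM for which $\sigma$ is compatible.

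The main obstacle I anticipate is the mutual independence of the reconstructed noises: the splitting alone only shows that each $U_k$ is a function of $Z_k$, and the assertion that the $U_k$ are jointly independent rests essentially on $\eta$ being a product measure and on $S(\sigma)$ being the joint KR transport rather than a mere collection of marginal monotone maps. I would also be careful that $R_k$ is genuinely strictly increasing (needed for invertibility of $h_k$), which follows from strict monotonicity of $B_k$ and of $S(\sigma)_k$ in its last argument, with Assumption \ref{assum:smooth_positivity} supplying the smoothness and positivity that guarantee $\log\pi$, and hence $S(\sigma)$, are differentiable enough for these manipulations.
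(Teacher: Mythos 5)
Your proof is correct and follows essentially the same route as the paper's: the forward direction rests on the explicit PNL form of the KR map, $S(\sigma)_k = M_k(\sigma)\bigl(h_k^{-1}(x_{\sigma(k)}) - g_k(\cdot)\bigr)$ with $B_k := M_k(\sigma)^{-1}$, and the reverse direction extracts the additive splitting $B_k \circ S(\sigma)_k = R_k(x_{\sigma(k)}) + Q_k(x_{\sigma(1)},\ldots,x_{\sigma(k-1)})$ from the vanishing mixed partials, inverts the strictly increasing $R_k$, and uses the product structure of $\eta$ under the joint KR map to obtain mutually independent noises, exactly as in the paper. The only differences are points where you are more careful than the paper: you justify the claimed form of $S(\sigma)$ via the reduction to an ANM in the $h_k^{-1}$-transformed coordinates together with uniqueness of the KR map, and you handle possibly decreasing $h_k$ by an explicit sign-flip, both of which the paper's proof leaves implicit.
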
 As in the case of ANMs, the characterization of Lemma \ref{lem:PNLtest} can be employed to design a PNL loss for an ordering.
Due to space limitation, we postpone the corresponding derivations to Appendix \ref{apx:pnl}.
With this PNL loss at hand, Algorithm \ref{alg:anm} can then be run with replacing $\mathrm{ANMloss}$ with $\mathrm{PNLloss}$.

\vspace{-.2cm}
\section{Numerical experiments}\label{sec:numerical_experiments}
\vspace{-.1cm}
This section presents the performance of our methods through illustrative examples. 
Further numerical experiments, plots and details can be found in Appendix \ref{apx:numerical_experiments}. 
We utilized TransportMaps package \cite{tmpackage} for recovering parametric maps.
\subsection{Experiments for PC--OT}\label{sec:exp_pc_ot}

In the sequel, we compared the performance of our PC--OT method with both PC algorithm \cite{spirtes2000causation} and Grow-Shrink (GS) algorithm \cite{spirtes2000causation}, provided with conditional independence tests which are designed for Gaussian distributions (using CDT package \cite{kalainathan2019causal}). 
On the contrary, our method is equipped with the OT-based conditional independence criterion, which is agnostic to the noise distribution.

We worked with synthetic data from a SEM where the exogenous noises are non-Gaussian, details of which can be found in Appendix \ref{apx:numerical_experiments}. The underlying causal graph is represented in Figure \ref{fig:causal_graphs_experiments}b. We see on Figure \ref{fig:PCOT_vs_PC_GS} that PC--OT outperforms these two methods as soon as the number of samples is large enough. This superior performance is illustrated for the number of misoriented edges of the output as well as the overall loss taking into account missing, extra and misoriented edges. Results are averaged over $20$ tests.
See Appendix \ref{apx:numerical_experiments} for further details.

\begin{figure}
	\centering
    \hspace{-1cm}
\begin{subfigure}[c]{0.45\textwidth}
        \includegraphics[scale=0.44]{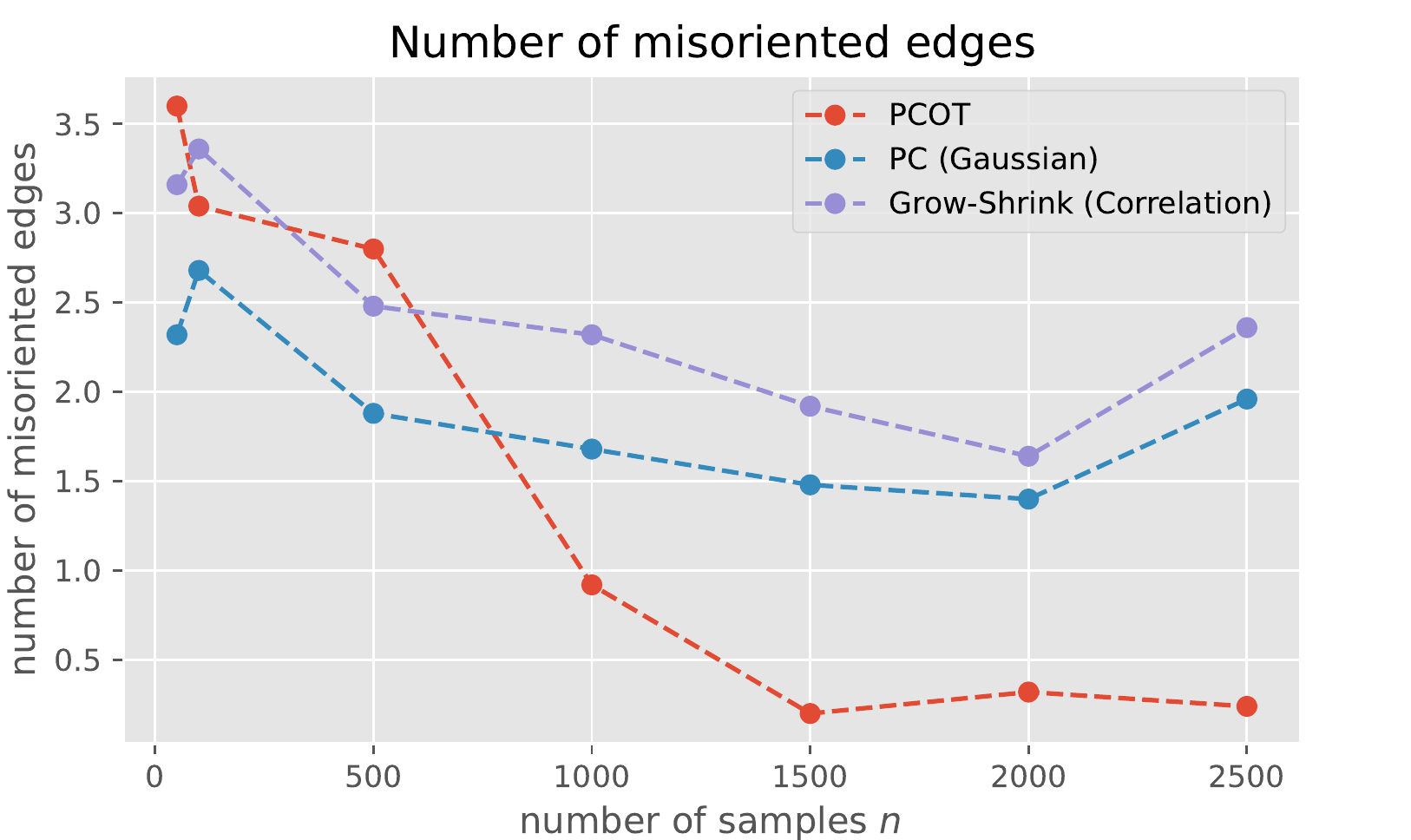}
  \subcaption{average number of misoriented edges}
  \label{fig:misorientedmain}
	\end{subfigure}
 \hspace{1.5cm}
 \begin{subfigure}[c]{0.45\textwidth}
		\includegraphics[scale=0.44]{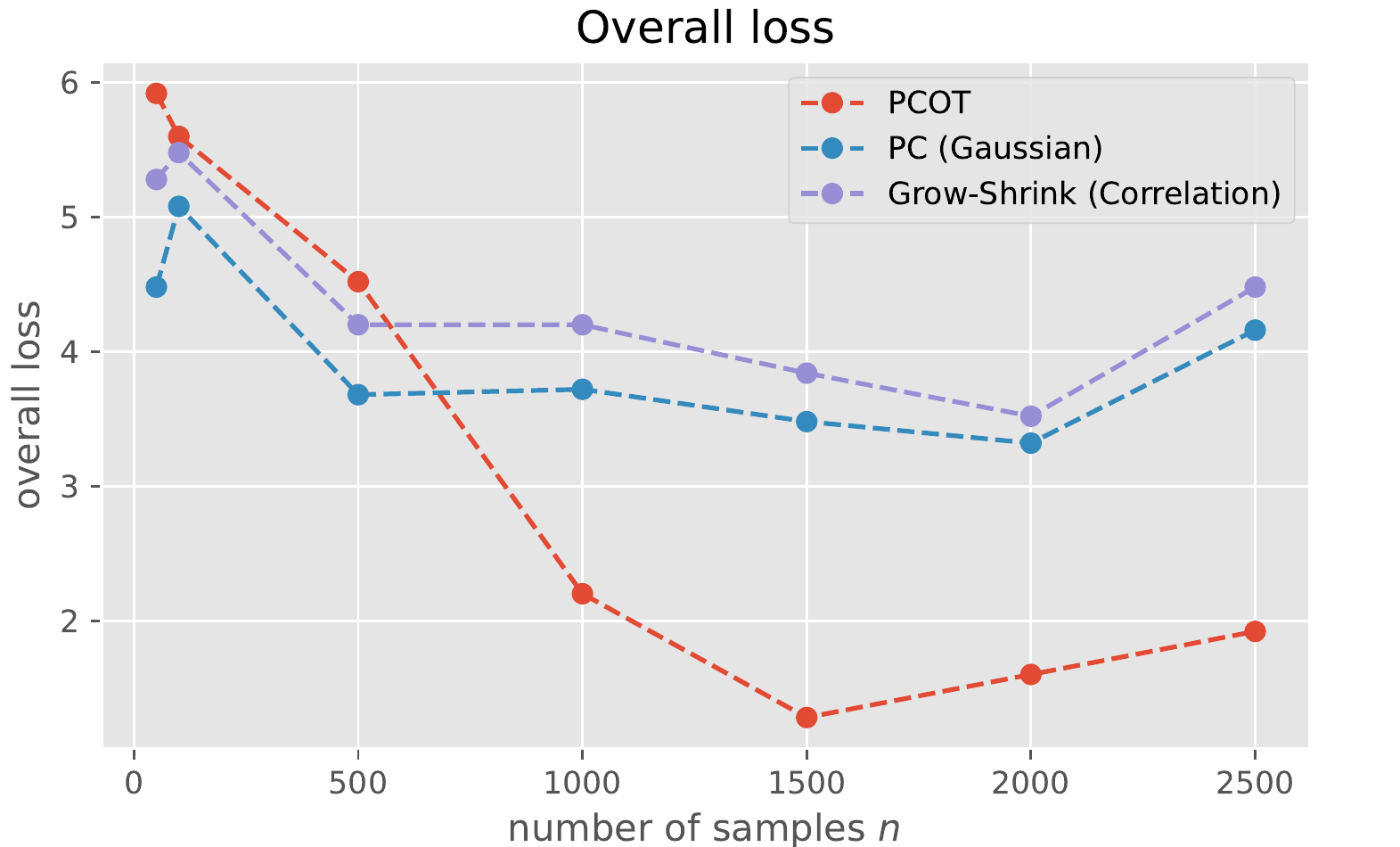}
  \subcaption{overall average loss}
  \label{fig:overall}
	\end{subfigure}
\caption{comparison of PC--OT with PC and GS algorithms.}
\label{fig:PCOT_vs_PC_GS}
\end{figure}

\subsection{Experiments for ANM-OT}

In order to illustrate the ANM-OT method, we worked with the DAG of Figure \ref{fig:causal_graphs_experiments}a. Note that the Markov equivalence class of this graph contains 4 DAGs. For each of these DAGs, we chose a compatible ordering and compared the $\mathrm{ANMloss}$ \eqref{eq:ANM_loss} of each of these orderings. Results shown in Figure \ref{fig:ANM-OT_experiments} show that the $\mathrm{ANMloss}$ of the true ordering (the rightmost one) is significantly lower than the other three. 

\begin{figure}[h]
	\centering
	\begin{subfigure}[c]{0.45\textwidth}
		\centering
        \includegraphics[scale=0.38]{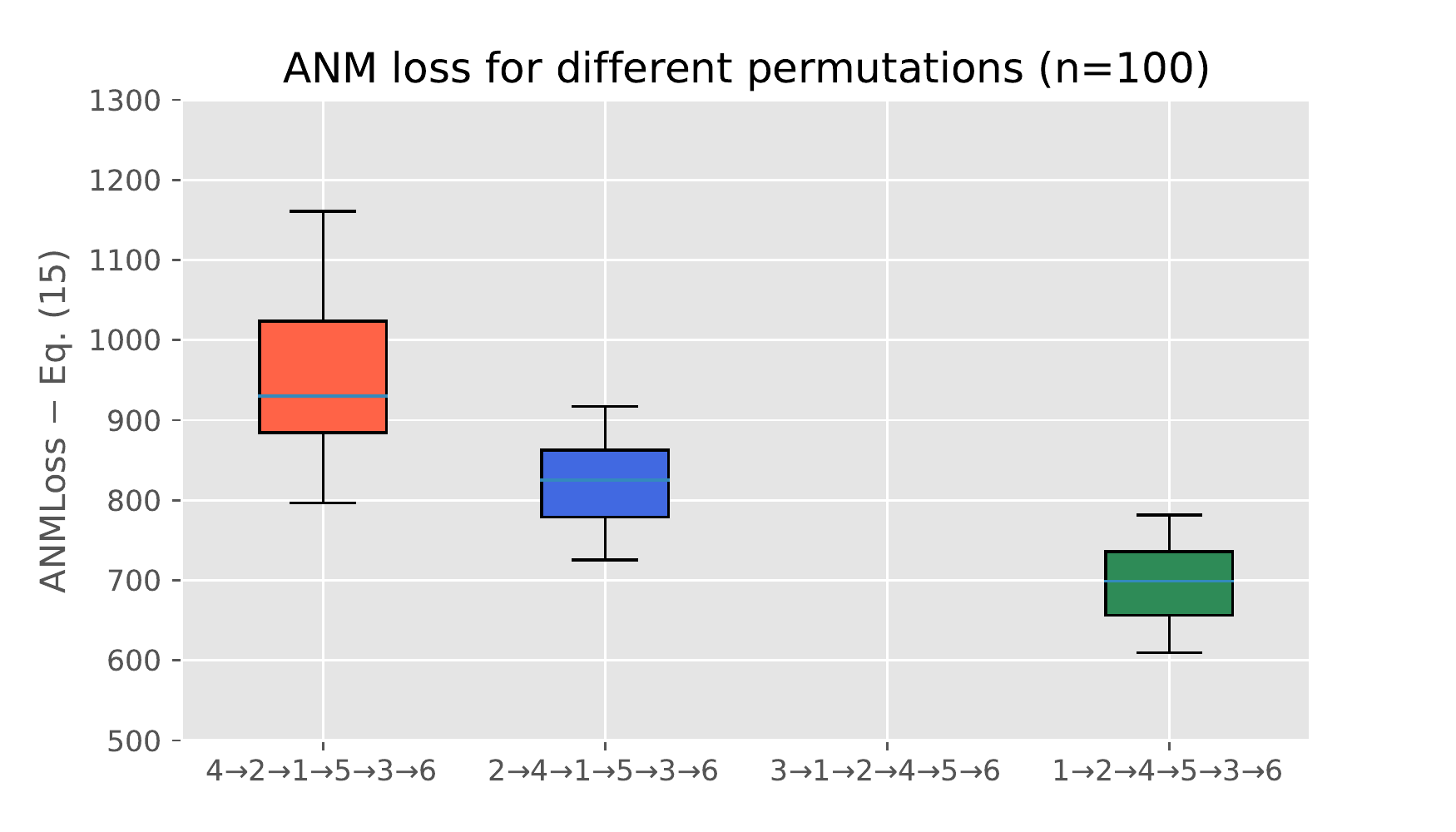}
    \label{fig:3a}
  \subcaption{ANMloss with $n=100$ samples available}
	\end{subfigure}
 \hfill
 \begin{subfigure}[c]{0.45\textwidth}
		\centering
		\includegraphics[scale=.38]{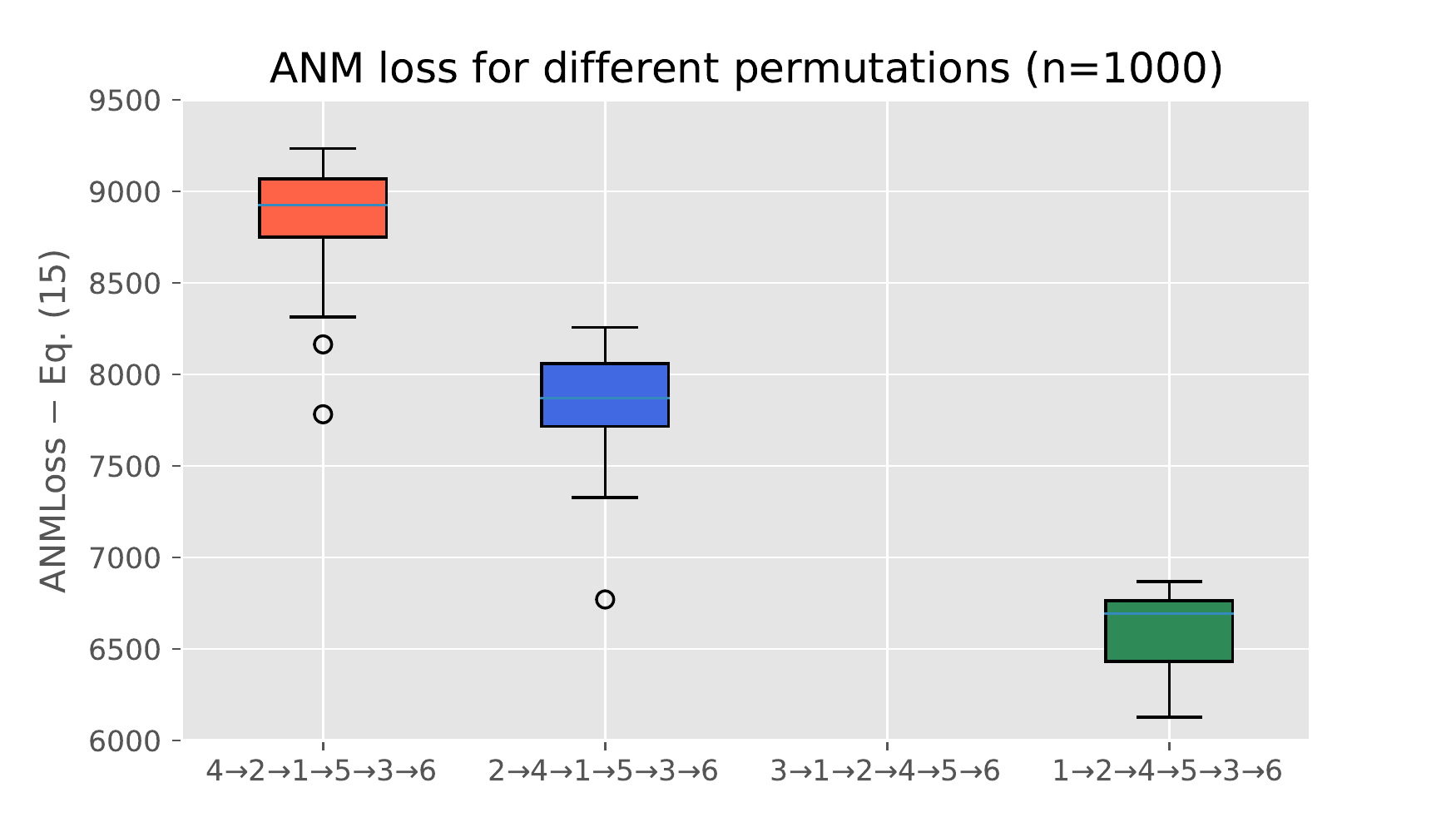}
  \label{fig:3b}
  \subcaption{ANMloss with $n=1000$ samples available}
	\end{subfigure}
\caption{ANMloss for the four different DAGs within the Markov equivalence class:
(a) with 100 samples, and (b) with 1000 samples.
For both plots, $\gamma_k=1$ was chosen for every $1\leq k\leq d$ (see Eq.~\eqref{eq:ANM_loss}.)
The plots are clipped for better visualization: the loss corresponding to ordering $3\to1\to2\to4\to5\to6$ is not included due to a large gap.}
\label{fig:ANM-OT_experiments}
\end{figure}

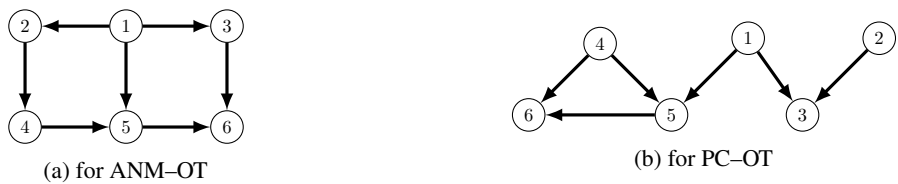
\begin{figure}[h]
	\centering
	\begin{subfigure}[c]{0.45\textwidth}
		\centering
		\begin{tikzpicture}[scale=0.6, every node/.style={scale=0.7}]
			\tikzset{vertex/.style = {shape=circle,draw,minimum size=1em}}
			\tikzset{edge/.style = {->,very thick,> = latex,sibling distance=20mm}}
			\node[vertex] (a) at (0,0) {$1$};
			\node[vertex] (b) [left  = 0.9cm of a] {$2$};
			\node[vertex] (c) [right = 0.9cm of a] {$3$};
			\node[vertex] (d) [below = 0.9cm of b] {$4$};
			\node[vertex] (e) [right = 0.9cm of d] {$5$};
			\node[vertex] (f) [right = 0.9cm of e] {$6$};
			\draw[edge] (a) to (b);
			\draw[edge] (a) to (c);
			\draw[edge] (b) to (d);
			\draw[edge] (c) to (f);
			\draw[edge] (d) to (e);
			\draw[edge] (e) to (f);
            \draw[edge] (a) to (e);
		\end{tikzpicture}
    \subcaption{for ANM--OT}
    \label{fig:exp-anm}
	\end{subfigure}
 \hfill
 \begin{subfigure}[c]{0.45\textwidth}
		\centering
		\begin{tikzpicture}[scale=0.6, every node/.style={scale=0.7}]
			\tikzset{vertex/.style = {shape=circle,draw,minimum size=1em}}
			\tikzset{edge/.style = {->,very thick,> = latex,sibling distance=20mm}}
			\node[vertex] (a) at (0,0) {$1$};
			\node[vertex] (b) [right  = 1.3cm of a] {$2$};
			\node[vertex] (c) [below left = 1cm of b] {$3$};
			\node[vertex] (e) [below left = 1cm of a] {$5$};
			\node[vertex] (d) [above left = 0.9cm of e] {$4$};
			\node[vertex] (f) [below left = 0.9cm of d] {$6$};
			\draw[edge] (a) to (c);
			\draw[edge] (b) to (c);
			\draw[edge] (a) to (e);
			\draw[edge] (d) to (e);
			\draw[edge] (d) to (f);
			\draw[edge] (e) to (f);
		\end{tikzpicture}
    \subcaption{for PC--OT}
    \label{fig:exp-pcot}
	\end{subfigure}
\caption{Underlying causal graphs in the numerical experiments.}
\label{fig:causal_graphs_experiments}
\end{figure}

\section{Concluding remarks}\label{sec:concluding_remarks}

In this work, we proposed a novel causal discovery method based on optimal transport (OT), designed to be agnostic to the noise distribution. This OT-based framework both helps recovering the causal graph up to Markov equivalence, and offers a
coherent framework for evaluating scores assessing the validity of structural assumptions such as additive noise or post-nonlinear models.

Bringing optimal transport framework to the fore as a valuable toolkit for causal discovery, we believe that this study may pave the way to future works of interest to the causality community. 

\bibliographystyle{plain}
\bibliography{biblio}
\clearpage
\appendix
\begin{center}
    \bfseries\Large Appendix
\end{center}
This appendix is organized as follows.
We present our OT-based version of PC algorithm in Section \ref{apx:pc}.
In Section \ref{apx:brenier}, we briefly review concepts from KR maps and their relation to Brenier maps for the sake of comprehensiveness and their subsequent utilization in our proofs.
We present the proofs of our main results in Section \ref{apx:proofs}.
Section \ref{apx:pnl} includes the derivation of PNLloss based on Lemma \ref{lem:PNLtest} in the text.
Section \ref{apx:numerical_experiments} is devoted to further experimental results, along with details of the experiments included in the main text.
\section{OT-based PC Algorithm}\label{apx:pc}

As discussed in the main text, any constraint-based causal discovery algorithm can be modified to utilize the proposed OT-based conditional independence test.
For illustration purposes, we present an OT-based version of PC algorithm \cite{spirtes2000causation} in this section.
The pseudo-code is provided as Algorithm \ref{alg:pc}.
\paragraph{High-level description.} 
Like classic PC, the algorithm begins with a complete undirected graph.
It keeps track of a counter $\Delta$, increasing it by one at each iteration.
At each iteration, subsets $\mathbf{Z}$ of size $\Delta+2$ are chosen, and SING \cite{morrison2017beyond} is called as a subroutine using only the samples corresponding to variables in $\mathbf{Z}$. 
The output of this subroutine is the matrix $\Omega^\mathbf{Z}$, with entries as defined in Eq.~\eqref{eq:def_omega}.
As long as the entry $\Omega_{k\ell}^\mathbf{Z}$ of the matrix does not exceed the threshold $\tau_{k\ell}^\mathbf
Z$, we conclude that  $X_k$ and $X_\ell$ are conditionally independent with respect to $\pi$, which results in removing the corresponding edge from $\hat{\mathcal{G}}$.
The corresponding separating set $\mathbf{Z}\setminus\{X_k,X_\ell\}$ is stored for orienting the v-structures at the end of the algorithm.
The algorithm iterates as long as the maximum degree of the remaining graph $\hat{\mathcal{G}}$ is at least $\Delta$.
Finally, Meek rules \cite{meek1995causal} are applied to output the essential graph.

\begin{algorithm}
\caption{PC--OT}
\label{alg:pc}
\begin{algorithmic}[1]
    \Statex\textbf{input:} $n$ i.i.d. samples from the observational distribution $\{\mathbf{x}^i\}_{i=1}^n\sim\pi$
    \Statex\textbf{output:} essential graph corresponding to the causal DAG $\mathcal{G}$
    \Function{PCOT}{$\{\mathbf{x}^i\}_{i=1}^n$}
        \State $\hat{\mathcal{G}}\gets$ complete undirected graph on $\mathbf{X}$, $\quad\Delta\gets 0$
        \For{every $k\neq\ell\in\{1,\dots,d\}$} $\:$ SepSet $(X_k,X_\ell)\gets$ \textrm{null}
        \EndFor
        \While{True}
            \For{every subset $\mathbf{Z}\subseteq\mathbf{X}$ of size $\Delta+2$}
            \State $\Omega^\mathbf{Z} \gets$ \Call{SING}{$\{\mathbf{x}_\mathbf{Z}^i\}_{i=1}^n\sim\pi_\mathbf{Z}$} 
                \For{each pair $\{X_k,X_\ell\}\subseteq\mathbf{Z}$}
                    \If{$\Omega_{k\ell}^\mathbf{Z}<\tau_{k\ell}^\mathbf{Z}$}
                        \State delete the edge between $X_k$ and $X_\ell$ in $\hat{\mathcal{G}}$
                        \State SepSet $(X_k,X_\ell)\gets\mathbf{Z}\setminus\{X_k,X_\ell\}$
                    \EndIf
                \EndFor
            \EndFor
            \State $\Delta\gets \Delta +1$
            \If{$\Delta > \mathrm{maxdegree}(\hat{\mathcal{G}})$}                \textbf{break}
            \EndIf
        \EndWhile
        \For{every triplet $k,\ell, m\in\{1,\dots,d\}$}
            \If{$\exists$ an edge between $X_k$ and $X_\ell$, and $X_\ell$ and $X_m$, no edge between $X_k$ and $X_m$}
                \State Orient $X_k\to X_\ell$ and $X_\ell\gets X_m$ in $\hat{\mathcal{G}}$ if and only if $X_\ell\notin$ SepSet $(X_k,X_m)$
            \EndIf
        \EndFor
        \State Apply Meek rules on $\hat{\mathcal{G}}$ \cite{meek1995causal}
        \State\Return $\hat{\mathcal{G}}$
    \EndFunction
\end{algorithmic}
\end{algorithm}

\section{On Knothe-Rosenblatt transport maps}\label{apx:brenier}

In this short part we give details on construction of Knothe-Rosenblatt transport maps between two distributions $\mu$ and $\nu$ in $\dR^d$. For a general recap on transport maps, we refer to Section \ref{sec:transport_maps}. In the following we assume that $\mu$ and $\nu$ have positive densities on $\dR^d$, with respect to Lebesgue measure. These assumptions are made for the sake of simplicity, but can be further relaxed \cite{rieger2012monotonicity,carlier2010from,knothe1957contributions,rosenblatt1952remarks,bonnotte2013}.

A fundamental building block for KR maps is given in this Lemma, which characterizes the one-dimensional monotone transport maps:
\begin{lemma}[See \cite{carlier2010from} and Proposition 2.5 of \cite{rieger2012monotonicity}]
\label{lem:uniqueness_brenier}
    When $\mu$ and $\nu$ are one-dimensional ($d=1$), there exists a unique strictly increasing transport map $T$ from $\mu$ to $\nu$, given by $T := F_{\nu}^{-1}  \circ F_{\mu}$, where $F_{\mu}$ (resp. $F_{\nu}$) is the cumulative distribution function (c.d.f.) of distribution $\mu$ (resp. of $\nu$).
    This map will be referred to as the \emph{(one-dimensional) Brenier map} between distributions $\mu$ and $\nu$.
\end{lemma}

We are now ready to describe the construction of KR maps. Recall that these maps are of the following form
\begin{equation*}
    S(\mathbf{x})=
    \begin{bmatrix*}[l]
        S_1(x_1)\\
        S_2(x_1,x_2)\\
        \vdots\\
        S_d(x_1,\dots,x_d)
    \end{bmatrix*},
\end{equation*} with $S_k$ is strictly increasing in the last variable for all $k$.

Let $(X_1, \ldots,X_d) \sim \mu$ and $(Y_1, \ldots,Y_d) \sim \nu$. 
The KR map $S$ is built recursively. First, let $S_1$ be the (unique) Brenier map from the distribution of $X_1$ to that of $Y_1$. Then, when $S_1, \ldots, S_{k-1}$ are already constructed, we define $S_{k}$ as follows. For every fixed $x_1, \ldots, x_{k-1}$, the map $S_{k}(x_1, \ldots, x_{k-1}, \cdot)$ is defined as the (unique) Brenier map from the distribution of 
$$ (X_k \cond X_{k-1}=x_{k-1}, \ldots, X_1=x_1)$$
to that of
$$(Y_k \cond Y_{k-1} = S_{k-1}(x_1, \ldots, x_{k-1}), \ldots, Y_1 = S_{1}(x_1)) \, .$$
It can be easily checked that this map $S$ previously defined is $(i)$ transporting $\mu$ onto $\nu$ and $(ii)$ satisfying the lower-triangular and monotonicity properties. 

\section{Proofs}\label{apx:proofs}
\thmidwithkr*
\begin{proof}
First note that statement $(ii)$ follows from the identifiablity assumption. 

We prove $(i)$ recursively. Without loss of generality we assume that $\sigma$ is the identity permutation $\id$ and denote $S=S(\sigma)=S(\id)$. For any random variable $Y$, $F_Y$ will denote its cumulative distribution function (c.d.f). 
By definition, transport map $S$ has the following form
\[S(u_1,u_2,\ldots,u_d) = \begin{bmatrix*}[l]
        S_1(u_1)\\
        S_2(u_1,u_2)\\
        \vdots\\
        S_d(u_1,\dots,u_d)
    \end{bmatrix*}
    \, .\]
By definition of a compatible ordering, $X_{1}$ has no parent in $\cG$, hence $X_{1} := f_{1}(U_1)$. The map $S_1$ is by definition (see Appendix \ref{apx:brenier}) the monotone Brenier map between the distribution of $U_1$ and that of $X_1 = f_{1}(U_{1})$. Since by assumption $f_1$ and $F_{U_1}$ are strictly increasing, then $F_{f_{1}(U_{1})}$ is invertible and this Brenier 1D map is given by
\[S_1(u_1) = F_{f_{1}(U_{1})}^{-1} \circ F_{U_1}(u_1) = (F_{U_{1}} \circ f_{1}^{-1})^{-1} \circ F_{U_1}(u_1) =f_1 (u_1) \, .
 \]
Then, assume that \eqref{eq:th:id_with_KR} holds for $1 \leq k < \ell$. Fix $u_1, \ldots, u_{\ell-1} \in \dR$. 
By definition again, $u_\ell \mapsto S_\ell(u_1, \ldots, u_{\ell-1}, u_\ell)$ is the Brenier map between the first marginal distribution of $$(U_\ell \cond U_{\ell-1} = u_{\ell-1}, \ldots, U_1 = u_1) \overset{(d)}{=} U_\ell$$ and that of $$(X_\ell \cond X_{\ell-1} = S_{\ell-1}(u_1, \ldots, u_{\ell-1}), \ldots, X_1 = S_{1}(u_1)) \overset{(d)}{=} f_{\ell}\left( (S_{k}(u_{1}, \ldots, u_{k}))_{k : X_{k} \in \Par(X_{\ell})}, u_{\ell}  \right) ,$$
The second equality in distribution being justified by the fact that $\sigma = \id$ is a compatible ordering. 
Since by assumption $f_\ell$ is strictly increasing in the last variable and $F_{U_\ell}$ is strictly increasing, then $F_{f_{\ell}\left( (S_{k}(u_{1}, \ldots, u_{k}))_{k : X_{k} \in \Par(X_{\ell})}, U_{\ell}  \right) } = F_{ U_{\ell}} \circ f_{\ell}^{-1}\left( (S_{k}(u_{1}, \ldots, u_{k}))_{k : X_{k} \in \Par(X_{\ell})}, \cdot  \right) $ is invertible and this Brenier 1D map is given by
\begin{flalign*}
    S_\ell(u_1, \ldots, u_{\ell-1}, u_\ell) & = F_{f_{\ell}\left( (S_{k}(u_{1}, \ldots, u_{k}))_{k : X_{k} \in \Par(X_{\ell})}, U_{\ell}  \right) } ^{-1} \circ F_{U_\ell}(u_\ell) \\
    & = \left[F_{ U_{\ell}} \circ f_{\ell}^{-1}\left( (S_{k}(u_{1}, \ldots, u_{k}))_{k : X_{k} \in \Par(X_{\ell})}, \cdot  \right) \right] ^{-1} \circ F_{U_\ell}(u_\ell) \\
    & = f_{\ell}^{-1}\left( (S_{k}(u_{1}, \ldots, u_{k}))_{k : X_{k} \in \Par(X_{\ell})}, u_\ell  \right) \, .
\end{flalign*} 
\end{proof}

\lemcitest*
\begin{proof}
    Suppose the independence holds.
    Then the marginal density factorizes as follows.
    \[
        \pi_\mathbf{Z} = \pi_{\mathbf{Z}\setminus\{X_k,X_\ell\}}\cdot\pi_{X_k\vert\mathbf{Z}\setminus\{X_k,X_\ell\}}\cdot\pi_{X_\ell\vert\mathbf{Z}\setminus\{X_k,X_\ell\}},
    \]
    and therefore,
    \begin{equation}\label{eq:factor}
        \log(\pi_\mathbf{Z}) = \log(\pi_{\mathbf{Z}\setminus\{X_k,X_\ell\}})+\log(\pi_{X_k\vert\mathbf{Z}\setminus\{X_k,X_\ell\}})+\log(\pi_{X_\ell\vert\mathbf{Z}\setminus\{X_k,X_\ell\}}).
    \end{equation}
    It is clear from Eq.~\ref{eq:factor} that 
    $\dfrac{\partial^2\log\pi_{\mathbf{Z}}}{\partial x_k\partial x_\ell}=0 \mbox{ on $\dR^{\vert\mathbf{Z}\vert}$}$.

    For the opposite direction, note that the general solution to the PDE $\dfrac{\partial^2\log\pi_{\mathbf{Z}}}{\partial x_k\partial x_\ell}=0 \mbox{ on }\dR^{\vert\mathbf{Z}\vert}$ 
    is given by $\log(\pi_\mathbf{Z})(\mathbf{z})=f(\mathbf{z}\setminus\{x_k\}) + g(\mathbf{z}\setminus\{x_\ell\})$, for some functions $f$ and $g$.
    The marginal density $\pi_\mathbf{Z}$ is then of the form
    \[\pi_\mathbf{Z}(\mathbf{z}) = \exp(f(\mathbf{z}\setminus\{x_k\}))\exp(g(\mathbf{z}\setminus\{x_\ell\})).\]
    Relying on the positivity of the density, we can compute the conditional density as follows: 
    \begin{flalign*}
        \pi_{X_k,X_\ell\vert\mathbf{Z}\setminus\{X_k,X_\ell\}}(\mathbf{z}) &= 
        \frac{\pi_\mathbf{Z}(\mathbf{z})}{\pi_{\mathbf{Z}\setminus\{X_k,X_\ell\}}(\mathbf{z})}
        =\frac{\exp(f(\mathbf{z}\setminus\{x_k\}))\exp(g(\mathbf{z}\setminus\{x_\ell\}))}{\iint e^{f(\mathbf{z}\setminus\{x_k\})} e^{g(\mathbf{z}\setminus\{x_\ell\})} \mathrm{d}x_k \mathrm{d}x_\ell}\\
        &= 
        \frac{\exp(f(\mathbf{z}\setminus\{x_k\}))\exp(g(\mathbf{z}\setminus\{x_\ell\}))}{\int e^{f(\mathbf{z}\setminus\{x_k\})} \mathrm{d}x_k \int e^{g(\mathbf{z}\setminus\{x_\ell\})}\mathrm{d}x_\ell}\\
        &= 
        \frac{ \exp(f(\mathbf{z}\setminus\{x_k\})) \int e^{g(\mathbf{z}\setminus\{x_\ell\})}\mathrm{d}x_\ell}{\int e^{f(\mathbf{z}\setminus\{x_k\})} \mathrm{d}x_k \int e^{g(\mathbf{z}\setminus\{x_\ell\})}\mathrm{d}x_\ell} \cdot  
        \frac{\exp(g(\mathbf{z}\setminus\{x_\ell\})) \int e^{f(\mathbf{z}\setminus\{x_k\})} \mathrm{d}x_k}{\int e^{f(\mathbf{z}\setminus\{x_k\})} \mathrm{d}x_k \int e^{g(\mathbf{z}\setminus\{x_\ell\})}\mathrm{d}x_\ell}\\
        &=\frac{\pi_{\mathbf{Z}\setminus\{X_\ell\}}(\mathbf{z})}{\pi_{\mathbf{Z}\setminus\{X_k,X_\ell\}}(\mathbf{z})}\cdot\frac{\pi_{\mathbf{Z}\setminus\{X_k\}}(\mathbf{z})}{\pi_{\mathbf{Z}\setminus\{X_k,X_\ell\}}(\mathbf{z})}
        = \pi_{X_k\vert\mathbf{Z}\setminus\{X_k,X_\ell\}} (\mathbf{z})\cdot \pi_{X_\ell\vert\mathbf{Z}\setminus\{X_k,X_\ell\}}(\mathbf{z}),
    \end{flalign*} which implies the desired conditional independence relation.
\end{proof}

    

\lemanmtest*
\begin{proof}
The first direction is proved in the main text, applying Theorem \ref{th:id_with_KR} and considering the form of the KR map in equation \eqref{eq:ANM_KR_map}.
For the other direction, without loss of generality we assume that $\sigma$ is the identity permutation $\id$, the proof being identical for any other permutation.
The general solution to the PDE 
\[\frac{\partial}{\partial x_k} B_k \circ S_k (x_1, \ldots, x_k) - 1 = 0\]
is 
\begin{equation}\label{eq:sol_PDE_ANM}
    B_k \circ S_k (x_1, \ldots, x_k) = x_k - h_k(x_1,\ldots,x_{k-1}),
\end{equation}
for some function $h_k:\dR^{k-1}\to \dR$. By definition of transport map $S_k$, $S_k (X_1, \ldots, X_k)$ is a standard Gaussian variable. By \eqref{eq:sol_PDE_ANM}, denoting $U_k := - B_k \circ S_k (X_1, \ldots, X_k)$, we have for all $k$, $X_k = h_k(X_1,\ldots,X_{k-1}) + U_k$. By independence of the Gaussian marginals, the $U$ variables are independent.   
\end{proof}

\lempnltest*
\begin{proof}
Here again, without loss of generality we assume that $\sigma$ is the identity permutation $\id$, the proof being identical for any other permutation.

For the first direction, applying Theorem \ref{th:id_with_KR}, the KR map $S$ from $\pi$ to $\eta$ is of the form 
\vspace{-0.2cm} 
\begin{equation}\label{eq:PNL_KR_map}
S_k(x_1, \ldots, x_1) = M_k \left(h_k^{-1}(x_k) - g_k((x_\ell)_{X_\ell \in \Par(X_{k})})\right), 
\end{equation} where $M_k$ is the strictly increasing transport map from the distribution of $U_{k}$ to a standard Gaussian $\cN(0,1)$. With $B_k := M_k^{-1}$, $S_k$ is thus a solution to PDE \eqref{eq:lem:PNLtest}.

For the other direction, let us assume that for all $1 \leq \ell < k$, 
\begin{equation}\label{eq:PDE_PNL}
    \frac{\partial^2}{\partial x_k \partial x_\ell} B_k \circ S_k = 0\, .
\end{equation}
Applying \eqref{eq:PDE_PNL} for $\ell=1$, this implies that for all $B_k \circ S_k $ is of the form $B_k \circ S_k (x_1, \ldots, x_k) = a_1(x_2, \ldots, x_k) - b_1(x_1,\ldots,x_{k-1})$, where $a_1$ again satisfies \eqref{eq:PDE_PNL} for $\ell=2$, which again implies that $a_1$ is of the form
$a_1(x_2, \ldots, x_k) = a_2(x_3, \ldots, x_k) - b_2(x_2,\ldots,x_{k-1})$. Iterating over $1 \leq \ell \leq k-1$, we obtain that $B_k \circ S_k$ is of the form 
\begin{equation}\label{eq:pnlproof}
B_k \circ S_k (x_1, \ldots, x_k) = g(x_k) - h(x_1, \ldots, x_{k-1}) \, .
\end{equation}

By definition, $S_k$ is strictly increasing in $x_k$, and $B_k$ is a strictly increasing transport map.
Therefore, Eq.~\eqref{eq:pnlproof} implies that $g$ is strictly increasing in $x_k$, and $g^{-1}$ is well-defined.
By definition of transport map $S_k$, $S_k (X_1, \ldots, X_k)$ is a standard Gaussian variable. 
By \eqref{eq:sol_PDE_ANM}, denoting $U_k := - B_k \circ S_k (X_1, \ldots, X_k)$, we have for all $k$, $g(X_k) = h(X_1,\ldots,X_{k-1}) + U_k$. By independence of the Gaussian marginals, the $U$ variables are independent.
Finally, applying the function $g^{-1}$ to both sides, we get
$X_k = g^{-1}(h(X_1,\ldots,X_{k-1})+U_k)$, which is a PNL considering the independence of $U$ variables.

\end{proof}

\section{PNLs}\label{apx:pnl}
Due to space limitations, the derivations of PNLloss was postponed to this appendix.
In view of Lemma \ref{lem:PNLtest},
for a given ordering $\sigma$, we can parameterize each map $B_k(\sigma)$ with vector ${\bm \beta_k}$ as in \eqref{eq:param_beta_k}, and ${\bm \beta^*_k}$ is now estimated by optimizing the loss given by Lemma \ref{lem:PNLtest}:
\begin{flalign}\label{eq:PNL_beta_star_k}
    {\bm \beta^*_k} & := \argmin_{\bm \beta_k}  \sum_{\substack{1 \leq \ell \leq k \\ \ell \neq k}}\dE_{\pi}\left[ \left| \frac{\partial^2}{\partial x_{\sigma(l)}\partial x_{\sigma(k)}} [B_k(\sigma)]_{{\bm \beta_k}} \circ (S(\sigma)_{\bm \alpha^*})_k (X_{\sigma(1)},\ldots,X_{\sigma(k)}) \right| \right] \nonumber \\
    & \approx \argmin_{\bm \beta_k} \,  [\mathrm{PNLloss}_{k}(\sigma,\mathbf{x})]({\bm \beta_k}),
\end{flalign} where 
\begin{equation}\label{eq:PNL_loss_k_beta}
    [\mathrm{PNLloss}_{k}(\sigma,\mathbf{x})]({\bm \beta_k}) := \sum_{\substack{1 \leq \ell \leq k \\ \ell \neq k}} \sum_{i=1}^{n}  \left| \frac{\partial^2}{\partial x_{\sigma(l)}\partial x_{\sigma(k)}} [B_k(\sigma)]_{{\bm \beta_k}} \circ (S(\sigma)_{\bm \alpha^*})_k (x^i_{\sigma(1)},\ldots,x^i_{\sigma(k)}) \right| .
\end{equation}

\paragraph{PNL loss.} The \emph{PNL loss} of an ordering $\sigma$, parameterized by ${\bm \gamma} \in (\dR_{>0})^d$ is defined as
\begin{equation}\label{eq:PNL_loss}
    \mathrm{PNLloss}_{\bm \gamma}(\sigma, \mathbf{x}) := \sum_{k=1}^{d} \gamma_k  [\mathrm{PNLloss}_{k}(\sigma,\mathbf{x})]({\bm \beta_k}) ,
    \end{equation} where ${\bm \alpha^*}$, ${\bm \beta^*_k}$ and $\mathrm{ANMloss}_{k}(\sigma,\mathbf{x})$ are defined in \eqref{eq:alpha_star}, \eqref{eq:PNL_beta_star_k} and \eqref{eq:PNL_loss_k_beta}. 
    
    We note in particular that $$\mathrm{ANMloss}_{\bm \gamma}(\sigma) = 0 \implies \mathrm{PNLloss}_{\bm \gamma}(\sigma) = 0,$$
which agrees with the fact that $\{ \mbox{ANMs} \} \subsetneq \{ \mbox{PNLs} \}$.

\section{Further on numerical experiments}\label{apx:numerical_experiments}
In this section, we first provide comprehensive details of the numerical experiments included in the main text.
Subsequently, we unveil novel numerical experiments, including a numerical experiment with the real-world dataset 'Sachs' \cite{sachs2005causal}.

\subsection{Details of the experiments in the text}
\paragraph{PC-OT experiments.}
These experiments were conducted based on the following SEM:
\begin{equation}\label{eq:sem_num_exp}
  \begin{split}
X_1 & := U_1\\
X_2 & := U_2\\
X_3 & := X_1^2 + X_2 + U_3\\
X_4 &  := U_4 \\
X_5 &  := 0.5 X_1^2 - 0.5 X_4^2 + X_1X_4 + U_5\\
X_6 &  := X_4^3 - X_5 + U_6
  \end{split}
\quad \mbox{with} \quad  \begin{split}
U_1 & \sim 0.2 \mathcal{N}(0,1) \times \mathcal{N}(0,1)\\
U_2 & \sim  (\mathrm{Gumbel}(0,0.7)-2.5)/2.5 \\
U_3 & \sim  \mathcal{N}(0,1) \times \mathrm{Exp}(1)/8\\
U_4 &  \sim (\mathrm{Ber}(1/2) \times \mathrm{Exp}(1)-3)/2  \\
U_5 &  \sim (\mathrm{Ber}(1/2) \times \Gamma(2,3)-24)/12\\
U_6 & \sim  \mathrm{Gumbel}(0,0.5)-1.5 \\
  \end{split}
\end{equation} The underlying causal graph $\cG$ is given by Figure \ref{fig:exp-pcot}.
The overall loss of Figure \ref{fig:overall} was defined as the sum of number of missing, extra, and misoriented edges.
Figure \ref{fig:PCOT_vs_PC_GS2} illustrates the decomposition of these error terms.
Note that the comparison between the number of misoriented edges was included in Figure \ref{fig:misorientedmain}.

\begin{figure}[ht]
	\centering
    \hspace{-1cm}
\begin{subfigure}[c]{0.45\textwidth}
        \includegraphics[scale=0.44]{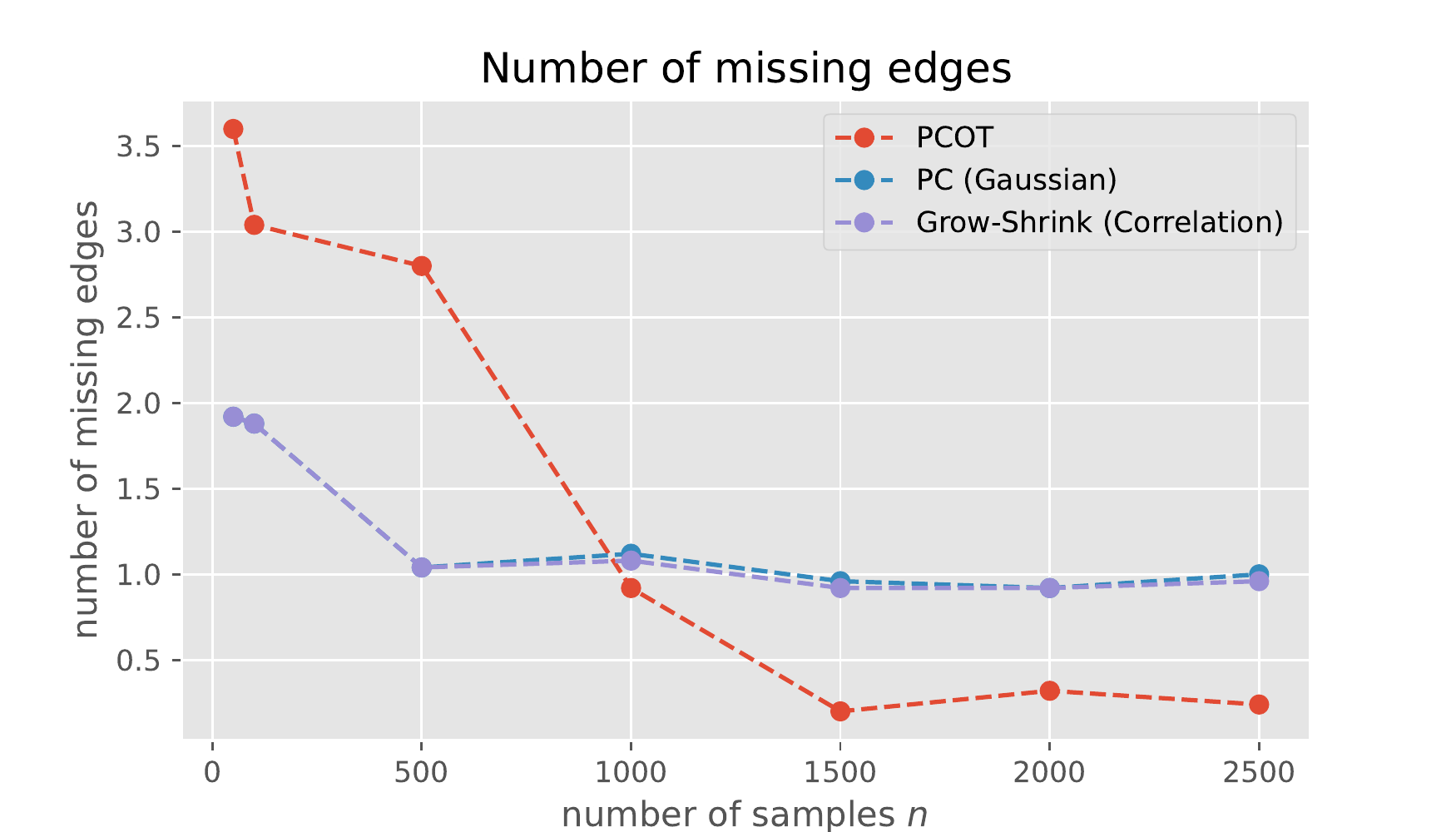}
  \subcaption{average number of missing edges}
	\end{subfigure}
 \hspace{1.5cm}
 \begin{subfigure}[c]{0.45\textwidth}
		\includegraphics[scale=0.44]{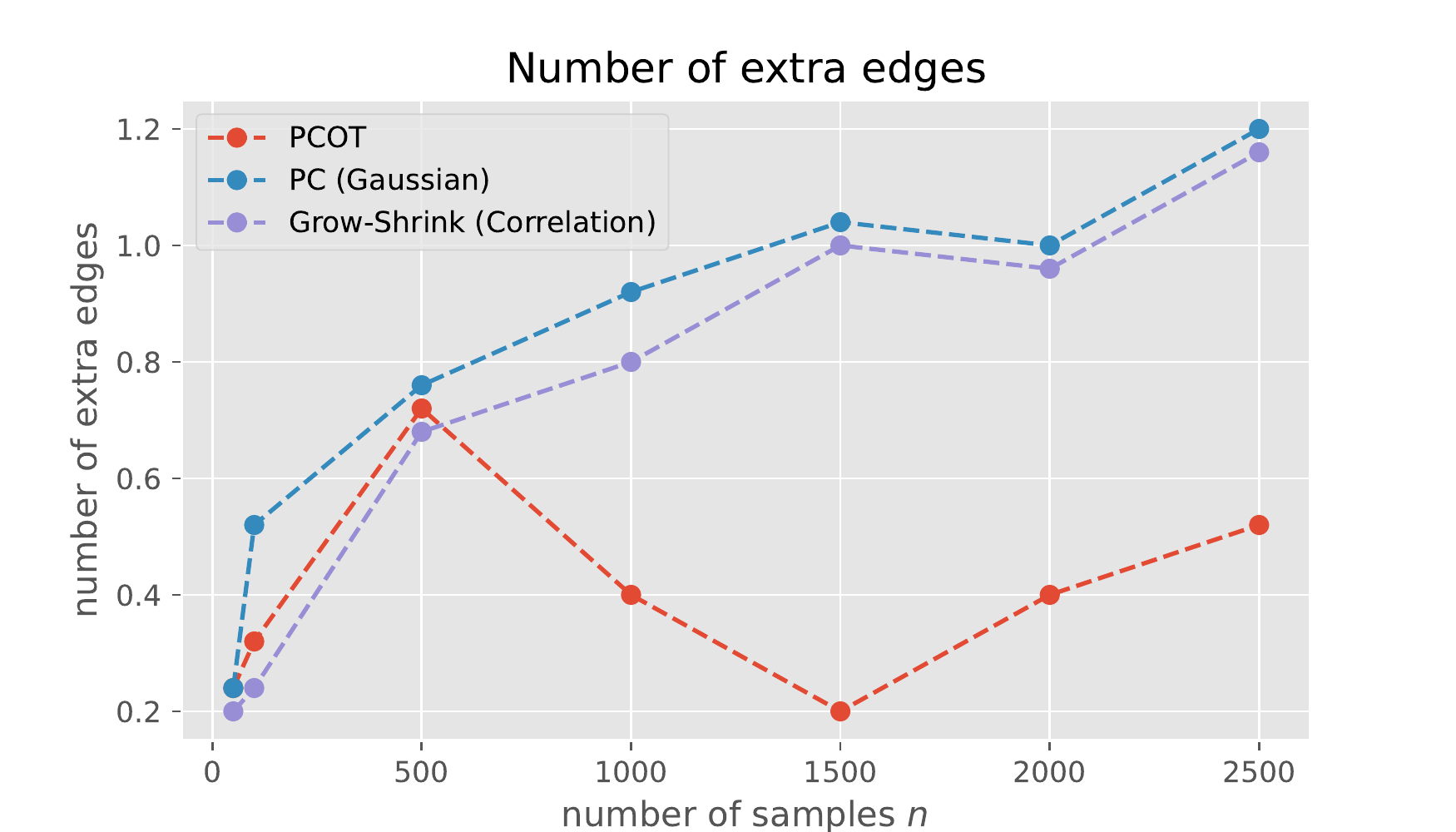}
  \subcaption{average number of extra edges}
	\end{subfigure}
\caption{Decomposition of the errors made by PC-OT, PC (Gaussian) and Grow-Shrink.}
\label{fig:PCOT_vs_PC_GS2}
\end{figure}

\paragraph{ANM experiments.}
Within these experiments, we worked with the following SEM:
\begin{equation}\label{eq:sem_num_exp2}
  \begin{split}
X_1 & := U_1\\
X_2 & := 0.5X_1^2+U_2\\
X_3 & := \log(X_1^2) + U_3\\
X_4 &  := 2X_2(X_2+1)+ U_4 \\
X_5 &  := 0.5 X_1^2 - 0.5 X_4^2 + X_1X_4 + U_5\\
X_6 &  := 0.25X_3^2 - X_5 + U_6
  \end{split}
\quad \mbox{with} \quad  \begin{split}
U_1 & \sim 0.2 \mathcal{N}(0,1)^2\\
U_2 & \sim  0.5 \mathcal{N}(-2.5,1) \\
U_3 & \sim  \log(\mathcal{N}(0,1)^2+1)\\
U_4 &  \sim  0.3\mathcal{N}(0,1)^2 \\
U_5 &  \sim \log(\mathcal{N}(0,1)^2+1)\\
U_6 & \sim   \mathcal{N}(0,1)\\
  \end{split}
\end{equation} The underlying causal graph $\cG$ is given by Figure \ref{fig:exp-anm}.
For each sample size, we repeated the experiment $20$ times, and the box plots of the ANMlosses corresponding to each permutation was depicted in Figure \ref{fig:ANM-OT_experiments}.
The permutation corresponding to the true causal order was $1\to2\to 4\to5\to3\to6$, which had the lowest ANMloss among all compatible permutations.
Further, the gap between the ANMlosses increased as the number of samples grew larger.

\subsection{Further experiments}
\paragraph{Real-world data.}
In this section, we consider a dataset corresponding to the causal relations among components of a cellular signaling network based on single-cell data, namely 'Sachs' dataset \cite{sachs2005causal}.
This dataset comprises samples of $7446$ primary human immune system cells.
We consider a subnetwork of this dataset corresponding to the proteins $Pcl_\gamma$, $PIP3$, $PIP2$, $PKC$ and $Akt$.
The causal mechanisms between these proteins are depicted in Figure \ref{fig:proteins}.

\begin{figure}[h]
    \centering
        \begin{tikzpicture}[scale=0.6, every node/.style={scale=0.7}]
			\tikzset{vertex/.style = {shape=circle,draw,minimum size=1em}}
			\tikzset{edge/.style = {->,very thick,> = latex,sibling distance=20mm}}
			\node[vertex] (a) at (0,0) {$1$};
			\node[vertex] (b) [right  = 1.3cm of a] {$2$};
			\node[vertex] (c) [below left = 1cm and 1.5cm of b] {$3$};
			\node[vertex] (e) [below  = 2cm of a] {$5$};
			\node[vertex] (d) [above left = 1.5cm of e] {$4$};
			\draw[edge] (a) to (b);
			\draw[edge] (a) to (c);
            \draw[edge] (a) to (d);
            \draw[edge] (b) to (c);
            \draw[edge] (b) to (e);
            \draw[edge] (c) to (d);
            \draw[edge] (d) to (e);
		\end{tikzpicture}
    \caption{Causal mechanisms pertaining to the proteins $1\coloneqq Pcl_\gamma$, $2\coloneqq PIP3$, $3\coloneqq PIP2$, $4\coloneqq PKC$ and $5\coloneqq Akt$.}
    \label{fig:proteins}
\end{figure}

Since the dataset comprises values between $1.0$ and $9058$, we applied a logarithm function so that the support spans the real numbers.
We then provided the Markov equivalence class of Figure \ref{fig:proteins} (which consists of $10$ different DAGs) to our ANM-OT algorithm.
Table \ref{tab:anm} below demonstrates the ANMlosses corresponding to each permutation.
The ground truth permutation is the last entry of the table, $1\to 2\to 3\to 4\to 5$.
As can be seen in Table \ref{tab:anm}, the ground truth permutation has the second lowest ANMloss, following the permutation $2\to 1\to 3\to 4\to 5$, which is a transposition of the true permutation.

\renewcommand{\arraystretch}{1.2}
\begin{table*}[ht]
\caption{ANMlosses pertaining to the permutations compatible with the Markov equivalence class of the DAG in Figure \ref{fig:proteins}.}
    \centering
    \begin{tabular}{ c| c}
        \toprule
        Permutation & ANMloss (ANM-OT)\\
        \hline
        $3\to 2\to 1\to 4\to 5$ &46,310.82\\ \hline
        $2\to 3\to 1\to 4\to 5$ &36,563.22\\ \hline
        $2\to 1\to 3\to 4\to 5$ &28,517.74\\ \hline
        $4\to 3\to 1\to 2\to 5$&48,449.60\\\hline
        $3\to 4\to 1\to 2\to 5$&49,686.34\\\hline
        $3\to 1\to 4\to 2\to 5$&49,448.33\\\hline
        $4\to 1\to 3\to 2\to 5$&38,816.92\\\hline
        $1\to 4\to 3\to 2\to 5$&33,776.88\\\hline
        $1\to 3\to4\to 2\to 5$ &35,402.31\\\hline
        $1\to 2\to 3\to 4\to 5$&31,732.97\\
        \bottomrule
    \end{tabular}
    \label{tab:anm}
\end{table*}

\paragraph{Another illustrative example for PC-OT.}
To illustrate the effectiveness of PC-OT on data with non-Gaussian noise, we provide a numerical experiment on a small model.
The SEM we consider is as follows.
\begin{equation}\label{eq:sembis}
    \begin{split}
X_1 & := U_1/450\\
X_2 & := U_2\\
X_3 & := (X_2^3 + \log(|X_2|X_1^2))/15\\
  \end{split}
\quad \mbox{with} \quad  \begin{split}
U_1 & \overset{(d)}{=} \sqrt{4/3}(\mathrm{Pow}(4)-3/2) \mbox{ conditioned to be } \leq 1000\\
U_2 & \overset{(d)}{=} (\mathrm{Gumbel}(0,0.7)-2.5)/2.5 \\
U_3 & \overset{(d)}{=} \mathrm{Ber}(1/2) \times \mathrm{Exp}(1/2)\\
  \end{split}
\end{equation}

Note that the DAG corresponding to the SEM of Eq.~\eqref{eq:sembis} is a v-structure, namely $X_1\to X_3\gets X_2$.
We repeated the experiments of Section \ref{sec:exp_pc_ot} using the SEM of Eq.~\eqref{eq:sembis}.
For comprehensiveness, we also included a version of PC algorithm provided with a kernel-based CI test, namely 
HSIC-Gamma \cite{gretton2007kernel}
provided in the CDT package \cite{kalainathan2018structural}.
The results are depicted in Figure \ref{fig:resultsbis}.
As witnessed in Figure \ref{fig:resultsbis}, PC-OT performs significantly better than PC with Gaussian CI tests.
Although the performance of PC-OT is comparable to PC with the kernel-based CI tests, the computing time of the kernel-based algorithm appears to be drastically growing with sample size.
In contrast, PC-OT does not suffer from a growing runtime.
It is noteworthy that with $2000$ samples, the kernel-based method necessitates a runtime that is 14 times greater compared to that of PC-OT.
\begin{figure}[ht]
	\centering
    \hspace{-.5cm}
    \begin{subfigure}[c]{0.45\textwidth}
        \includegraphics[scale=0.44]{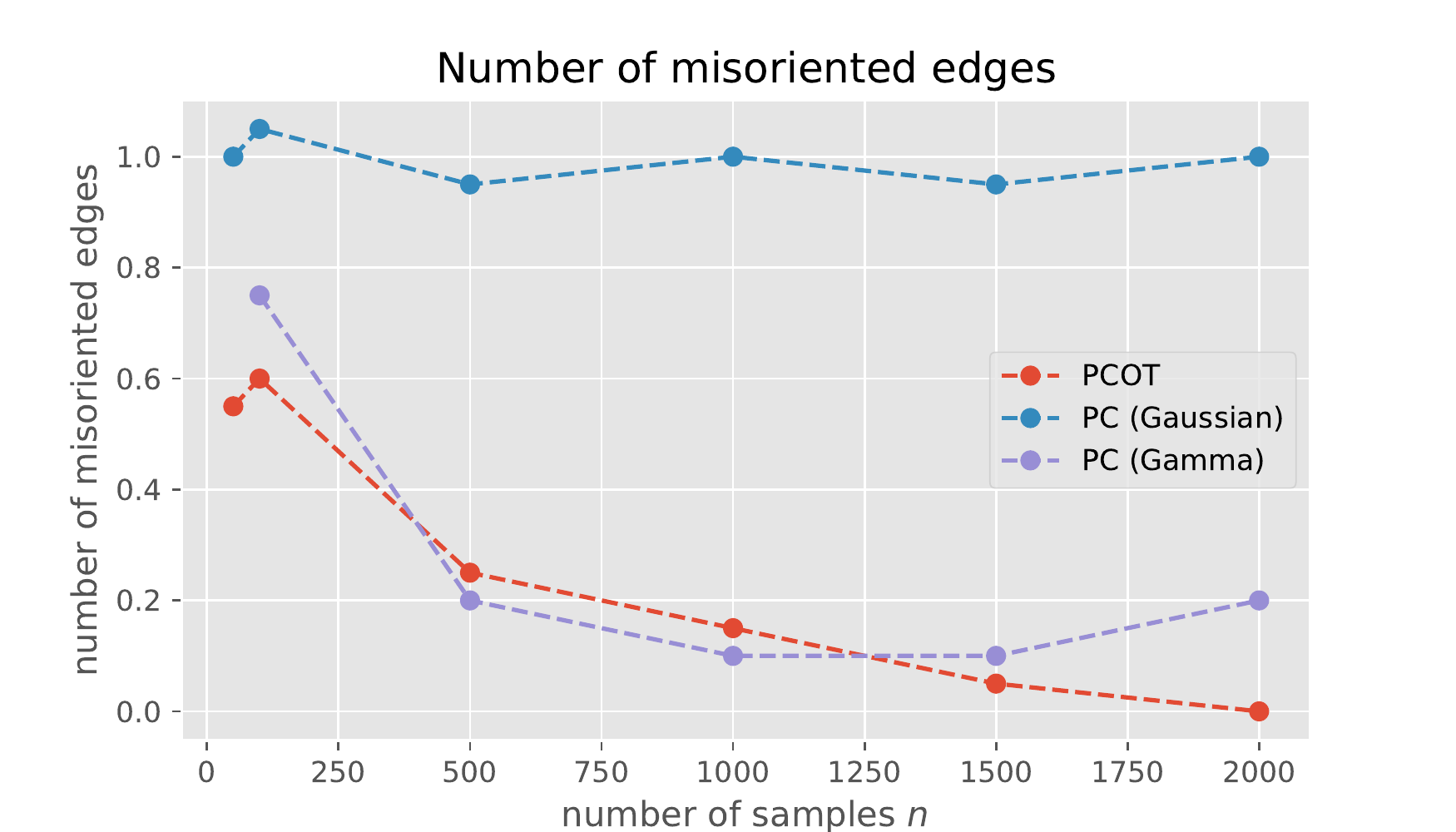}
  \subcaption{average number of misoriented edges}
	\end{subfigure}
 \hspace{1cm}
\begin{subfigure}[c]{0.45\textwidth}
        \includegraphics[scale=0.44]{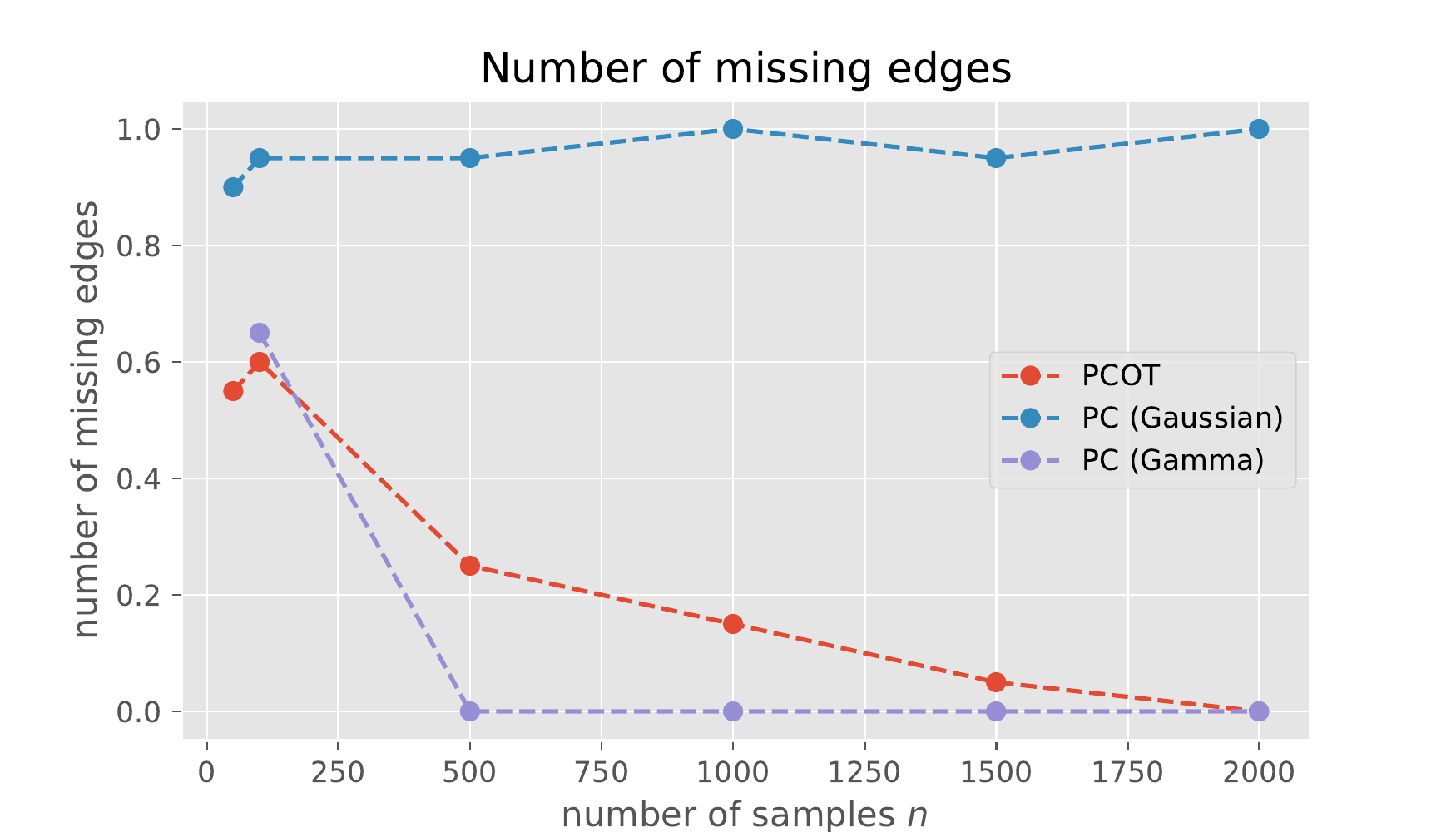}
  \subcaption{average number of missing edges}
	\end{subfigure}
 \hspace{-.5cm}
 \begin{subfigure}[c]{0.45\textwidth}
		\includegraphics[scale=0.44]{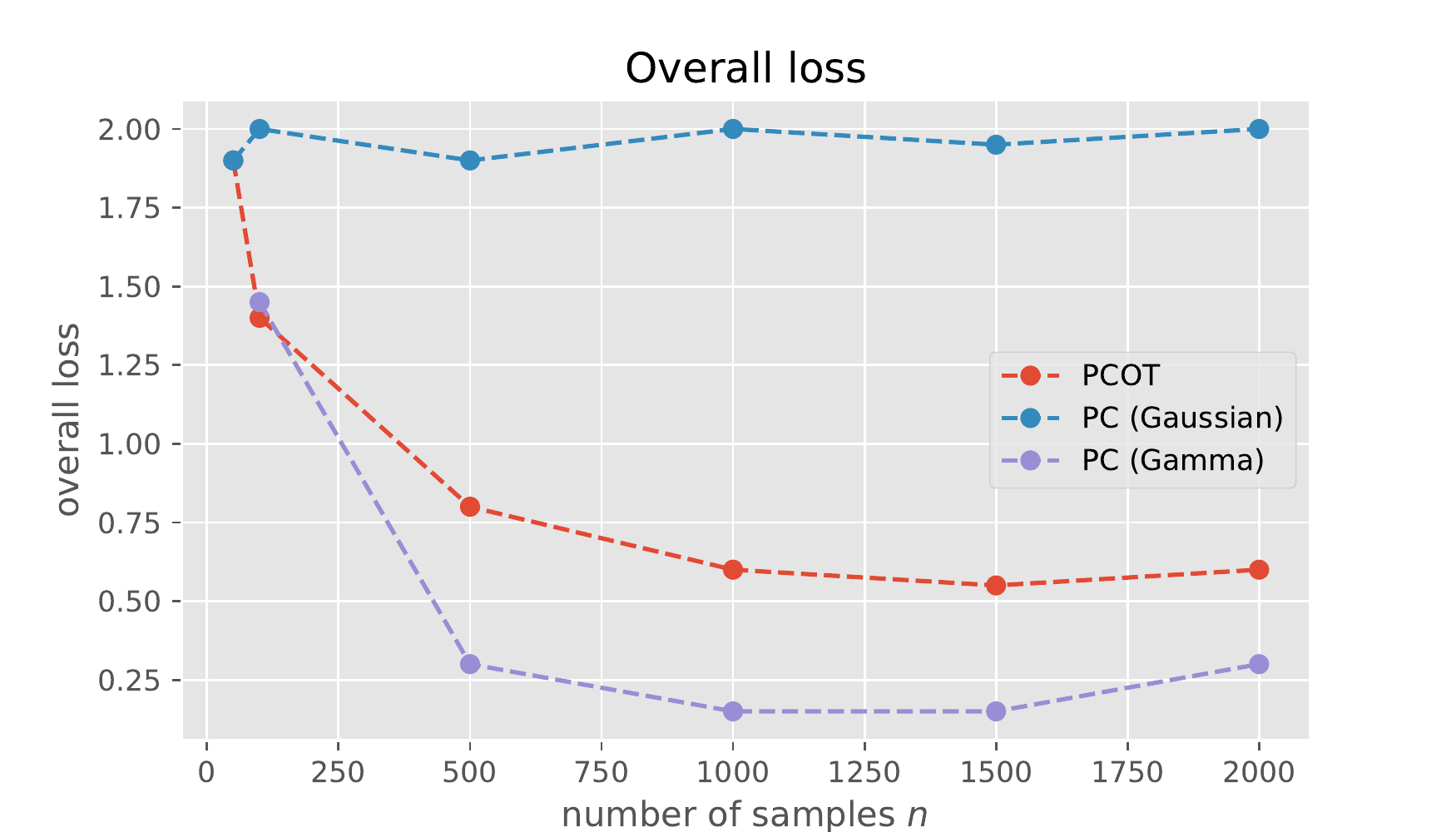}
  \subcaption{overall average loss}
	\end{subfigure}
 \hspace{1cm}
 \begin{subfigure}[c]{0.45\textwidth}
		\includegraphics[scale=0.44]{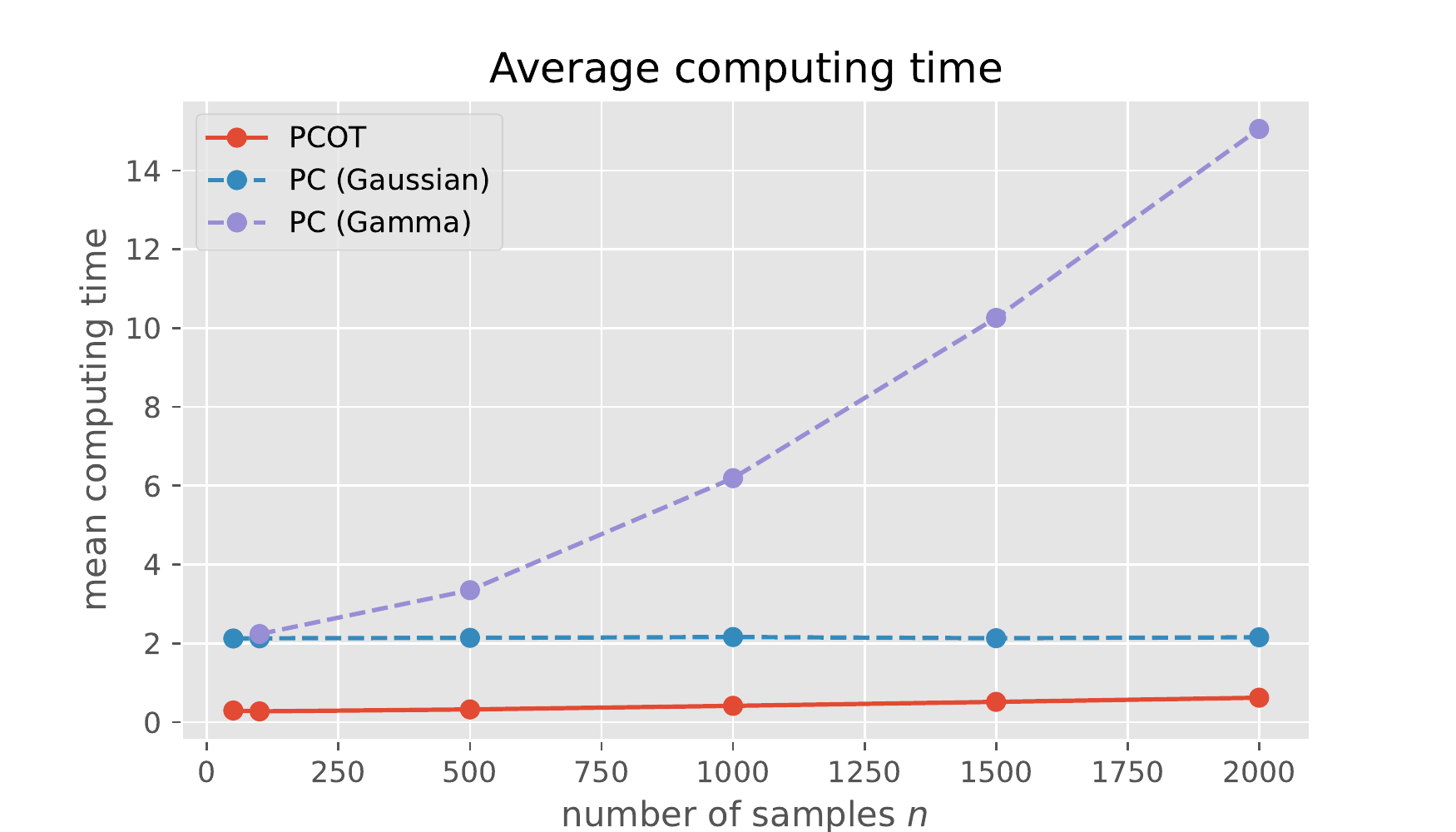}
  \subcaption{average computing time of the algorithms}
	\end{subfigure}
\caption{Performance of PC-OT, PC (Gaussian CI test) and PC (HSIC-Gamma CI test) on the illustrative example with SEM of Eq.~\eqref{eq:sembis}.}
\label{fig:resultsbis}
\end{figure}
\end{document}